\newcommand{\pslash}{\not{\hbox{\kern-2.3pt $p$}}}
\newcommand{\pdslash}{\not{\hbox{\kern-2pt $\partial$}}}
\newtheorem{theorem}{Theorem}[section]
\newtheorem{corollary}{Corollary}[section]
\newenvironment{proof}[1][Proof]{\noindent\textbf{#1.} }{\ \rule{0.5em}{0.5em}}
\begin{document}

\title[]{An Approach by Representation of Algebras for Decoherence-Free Subspaces}%

\author{M. A. S. Trindade}%
\affiliation{Departamento de Ci\^{e}ncias Exatas e da Terra, Universidade do Estado da Bahia, Rodovia Alagoinhas/Salvador,
BR 110, Km 03, 48040-210, Alagoinhas, Bahia, Brazil.}

\author{Eric Pinto}%
\affiliation{Departamento de Matem\'atica, Universidade Federal da Bahia, Campus Ondina, 40210-340, Salvador, Bahia, Brazil.}

\author{J. D. M. Vianna}%
\affiliation{Instituto de F\'isica, Universidade Federal da Bahia, Campus Ondina, 40210-340, Salvador, Bahia, Brazil.}
\affiliation{International Center for Condensed Matter Physics, Instituto de F\'isica, Universidade de Bras\'ilia, 70910-900, Bras\'ilia, DF, Brazil.}

\author{PACS numbers: 03.65.Fd, 03.65.Yz, 03.67.-a}

\begin{abstract}
The aim of this paper is to present a general algebraic formulation for the Decoherence-Free Subspaces (DFSs). For this purpose, we initially generalize some results of Pauli and Artin about semisimple algebras. Then we derive orthogonality theorems for algebras analogous to finite groups. In order to build the DFSs we consider the tensor product of Clifford algebras and left minimal ideals. Furthermore, we show that standard applications of group theory in quantum chemistry can be obtained in our formalism. Advantages and some perspectives are also discussed.
\end{abstract}

\keywords{Algebraic Methods; Decoherence-Free Subspaces; Quantum Information.}

\maketitle

\section{Introduction}
Group theoretical methods play a fundamental role in physics \cite{hamermesh,lomont,tinkhan}. In this context, many seminal works appeared in the scope of  quantum mechanics. The numbers and kinds of energy levels, for example, are determined by symmetry group of molecules. The theory of representation of finite groups enables applications in molecular vibrations, molecular orbital theory, transition metal chemistry and many others \cite{Bishop}. Infra-red spectra, ultra-violet spectra, dipoles moment and optical activities are physical properties which depend on molecular symmetry. It is shown that in the problems involving large numbers of orbitals, high-order secular equations can be formulated so that symmetry considerations simplify these equations \cite{Cotton}. In the modern theory of quantum computation, Deutsch \cite{Deu} showed that the Fourier transform over group $Z_{2^{n}}$ may be efficiently implemented in a quantum computer and many others problems can be described by the hidden subgroup problem \cite{Mosca,Jos} such as Schor's quantum algorithm for factoring and discrete logarithm, the graph isomorphism problem and certain shortest vector problems in lattices \cite{Oded}. Furthermore, in the quantum error correction,  one-dimensional representations of group algebras allow the characterization and construction of the decoherence-free subspaces (DFSs) for multiple-qubit errors \cite{lidar1,lidar2,lidar11}.

Decoherence \cite{dec1,nielsen,fortin} is responsible for the loss of quantum information from a system into the environment. Since the protection of quantum information is a central task in quantum information processing, decoherence mitigation strategies are important. One of them corresponds to decoherence-free subspaces. It is shown \cite{lidar1,lidar2} that when the Kraus operators are viewed as operators in algebra of the Pauli group, the decoherence free states belong to the one-dimensional irreducible representations of the Pauli group. In this scheme, these subspaces may be appear without spatial symmetry. A more general approach using the concept of interaction algebra was performed by Knill \emph{et al.} \cite{Knill}. Decoherence-free subspaces can be recognized as special case of general decomposition for appropriate graded interaction algebra. This is a general formalism for the theory of quantum error-correction with arbitrary noise. This approach visualizes the notion of error correcting codes as abstract particles, associating with irreducible representations of closed operator algebras and Hermitian conjugation.\

Algebraic formulations has enabled an alternative way to describe physical theories \cite{formula3,Holland,formula4,formula5,formula6,formula7,Schonberg,Schonberg1,Bohm,Vianna0,pinto1}. Dirac \cite{Dirac,Hileyn} pointed out that when algebraic methods are used for systems with an infinite numbers of degrees of freedom, it is possible to obtain solutions to some physical problems that give no solution in the usual Schr\"{o}dinger picture. Sch\"{o}nberg \cite{Schonberg} using relationships between Clifford and Grassmann algebras provided a better understanding of the relativistic phase space. He suggested that there is a deep relationship between quantum theory and geometry and this may be noticed by the observation that the formalism of quantum mechanics and quantum field theory can be interpreted as special kinds of geometric algebras \cite{formula3}. These  are algebras of symmetric and anti-symmetric tensors that have the same structure of the boson and fermion algebras related to annihilation and creation operators. In this perspective, an alternative interpretation was obtained by Bohm \cite{Bohm}, visualizing the spin as a property of a field of anti-symmetric tensor, characterizing a new type of motion. Still, in a algebraic scenario, the local quantum field theory, introduced by Haag and Kastler \cite{Haag}, is an application of C*-algebras to quantum field theory.  In this context, pure states correspond to irreducible representation, according to GNS construction. This technique capture the elements that should provide the physical foundations of a mathematically consistent formalism \cite{Emch}. In particle physics, an interesting application of Clifford algebras in flavor symmetry of quarks has been carried out with the concept of isotopy \cite{Roldao}. For the semiconductors physics, Dargys \cite{dargys1,dargys2} has  shown advantages of the application of Clifford algebras in the treatment of spin dynamics.

With regard to quantum computation, many authors \cite{Shy,Bay,Havel,Vlasov,Trindade} have investigated applications of Clifford algebras to describe quantum entanglement and numerous aspects related to this subject. The key idea is that operators and operands should be elements of same space. This can be performed using the concept of minimal left ideals or Clifford modules. By making tensor product of algebras and their minimal left ideals, it is possible \cite{Trindade} to describe states and operators in composite quantum systems within the same algebraic structure without resort to any representation on the Hilbert spaces. Consequently, an interesting aspect of this formulation is that both states and operators can be represented by means of the generators of the algebra, which is an advantage from the operational point of view.

In this paper we present a general algebraic formulation for description of quantum systems using orthogonality theorems for algebras and we show how these results can be useful for the construction of the DFSs. The paper is unfolded in the following sequence of presentation. In the Section \ref{sec:representation1} we developed some general results about irreducible representations of algebras. Section \ref{sec:results3} is devoted to applications in the algebraic construction of decoherence-free subspaces. In Sec. \ref{sec:conclu}, we present the conclusions.

\section{Representation Theory of Algebras} \label{sec:representation1}

We will start this section stating three theorems of representation theory of algebras. The first two were demonstrated by Pauli and Artin, and the third by Fr\"{o}benius and Schur \cite{pauli}. Then we will make our generalizations. This is relevant to our formulation in order that we can analyze composite systems.

\begin{theorem} \label{theo17}
Let A be an algebra and $D(A)$ a representation of A. We denote the basis elements of A by $a_{i}$ and $D(a_{i})$ its matrix
representation, where $i=1,2,\ldots,n$. Define $D_{ij}\equiv Tr[D(a_{i})D(a_{j})]$. Then A is semisimple algebra if and only if
$det\parallel D_{ij}\parallel \neq 0$.
\end{theorem}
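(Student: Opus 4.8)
The plan is to read $\parallel D_{ij}\parallel$ as the Gram matrix of the symmetric bilinear (trace) form
$B(x,y)\equiv \mathrm{Tr}[D(x)D(y)]=\mathrm{Tr}[D(xy)]$ on $A$, so that the condition $\det\parallel D_{ij}\parallel\neq 0$ is exactly non-degeneracy of $B$, a statement independent of the chosen basis $\{a_i\}$. I would work over $\mathbb{C}$ and take $D$ to be faithful (in particular the left regular representation, which is automatically faithful, always qualifies); this hypothesis is genuinely needed, since a non-faithful $D$ can make $B$ degenerate even for semisimple $A$. The argument then splits into the two implications.

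For the implication $\det\parallel D_{ij}\parallel\neq0 \Rightarrow A$ semisimple, I would argue by contraposition. Suppose $A$ is not semisimple and let $N\neq\{0\}$ be its radical, the maximal nilpotent two-sided ideal, with $N^{k}=0$. Pick $0\neq n\in N$. For every $a\in A$ the product $na$ lies in $N$, hence $(na)^{k}\in N^{k}=\{0\}$, so $na$ is nilpotent; since $D$ is an algebra homomorphism, $D(na)$ is a nilpotent matrix and $B(n,a)=\mathrm{Tr}[D(na)]=0$. Thus $n$ lies in the radical of $B$, so $B$ is degenerate and $\det\parallel D_{ij}\parallel=0$.

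For the converse, assume $A$ semisimple. By the Wedderburn--Artin theorem $A\cong\bigoplus_{k=1}^{r}M_{n_{k}}(\mathbb{C})$, and elements of distinct blocks annihilate one another, so the blocks are mutually $B$-orthogonal; it suffices to prove $B$ non-degenerate on each block. Restricted to a simple block $M_{n_{k}}(\mathbb{C})$, a faithful $D$ is, up to equivalence, a positive number $m_{k}$ of copies of the defining module, so there $B(x,y)=m_{k}\,\mathrm{tr}(xy)$; since the ordinary matrix trace form is non-degenerate on $M_{n_{k}}(\mathbb{C})$ (for instance $\mathrm{tr}(E_{ij}E_{lm})=\delta_{im}\delta_{jl}$) and $m_{k}\neq0$, the form $B$ is non-degenerate on that block. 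Reassembling the orthogonal blocks gives $B$ non-degenerate on $A$, i.e. $\det\parallel D_{ij}\parallel\neq0$.

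I expect the second implication to be the main obstacle: one must justify that a faithful representation of a semisimple algebra restricts to a nonzero multiple of each simple module (this is where faithfulness and complete reducibility enter) and that the ground field is benign. Over $\mathbb{C}$ the matrix trace form is manifestly non-degenerate; over a general field one would instead use the reduced trace on $M_{n_{k}}(\Delta_{k})$ for the division rings $\Delta_{k}$ together with a separability/characteristic-zero hypothesis. Since the applications here are over $\mathbb{C}$, the above suffices, and I would simply state the field and faithfulness assumptions explicitly at the outset so that the stated equivalence is clean.
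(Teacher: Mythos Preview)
Your proof sketch is correct and is essentially the standard argument for this classical criterion (radical in one direction, Wedderburn--Artin plus non-degeneracy of the matrix trace form in the other). However, there is nothing to compare it against: the paper does \emph{not} prove Theorem~\ref{theo17}. It is stated without proof as one of three background results, attributed to Pauli and Artin with the reference \cite{pauli}; the paper's own contributions begin at Theorem~\ref{theo20}.

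One remark worth keeping: you rightly flag that faithfulness of $D$ (e.g.\ the regular representation) is needed for the implication ``semisimple $\Rightarrow \det\lVert D_{ij}\rVert\neq 0$''. The paper's statement leaves $D$ unspecified, and indeed the way the result is later used in the proof of Theorem~\ref{theo20} is only the direction ``$\det\neq 0 \Rightarrow$ semisimple'', applied to the tensor-product representation built from the given $D$'s on the factors; for that direction faithfulness is irrelevant, which is presumably why the issue is glossed over. Your identification of this hidden hypothesis is a genuine sharpening of the stated theorem.
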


\begin{theorem}\label{theo18}
Let A be a semisimple algebra. Then the number of irreducible representations is equal to the number of basis elements that commute
with one another.
\end{theorem}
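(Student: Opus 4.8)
The plan is to deduce this from the Wedderburn structure theorem together with Schur's lemma. First I would use that $A$ is semisimple --- which by Theorem \ref{theo17} is equivalent to $\det\parallel D_{ij}\parallel\neq 0$ for a faithful representation --- to write $A$ (working over $\mathbb{C}$, or an algebraically closed field) as a direct sum of two-sided ideals $A=A_{1}\oplus\cdots\oplus A_{r}$, where each $A_{k}$ is a full matrix algebra $M_{n_{k}}(\mathbb{C})$. This decomposition is carried by a complete set of orthogonal central idempotents $e_{1},\ldots,e_{r}$ with $e_{k}e_{l}=\delta_{kl}e_{k}$ and $\sum_{k}e_{k}=1$, each $e_{k}$ being the identity of $A_{k}$.

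Second, I would identify the irreducible representations. Any irreducible representation $D$ of $A$ restricts to a representation of each ideal $A_{k}$; since the $e_{k}$ are central, the $D(e_{k})$ are commuting projections summing to the identity, so irreducibility forces $D(e_{k})=1$ for exactly one index $k$ and $D(e_{l})=0$ otherwise. Thus $D$ factors through a single simple component $A_{k}\cong M_{n_{k}}(\mathbb{C})$, and by Schur's lemma (equivalently, the classical fact that $M_{n}(\mathbb{C})$ admits a unique irreducible module up to equivalence, its column space $\mathbb{C}^{n}$) there is exactly one inequivalent irreducible representation attached to each $A_{k}$. Hence the number of inequivalent irreducible representations of $A$ equals $r$, the number of simple components.

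Third, I would count the commuting basis elements by computing the center. Because the $A_{k}$ are two-sided ideals with $A_{k}A_{l}=0$ for $k\neq l$, one has $Z(A)=Z(A_{1})\oplus\cdots\oplus Z(A_{r})$, and $Z(M_{n_{k}}(\mathbb{C}))=\mathbb{C}\,e_{k}$ is one-dimensional, so $\dim Z(A)=r$. Then $\{e_{1},\ldots,e_{r}\}$ is a basis of $Z(A)$ consisting of elements that commute with one another (indeed with all of $A$), and $Z(A)\subseteq A=\operatorname{span}\{a_{i}\}$ shows these may legitimately be regarded as basis elements of $A$. Any maximal linearly independent set of mutually commuting central elements of $A$ therefore has exactly $\dim Z(A)=r$ members, which is the asserted equality.

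The main obstacle I anticipate is making the phrase ``basis elements that commute with one another'' precise and consistent with the standing hypotheses: the clean invariant is $\dim Z(A)$, and this coincides with the number of irreducibles only over an algebraically closed ground field, since otherwise each $Z(A_{k})$ may be a nontrivial field extension and the two counts diverge. I would therefore either fix the scalars to be $\mathbb{C}$ at the outset or reinterpret the statement as the dimension of the center; once that is settled, the argument (Wedderburn plus Schur) is routine, the only remaining care being the verification that the central idempotents, not merely abstract central elements, lie in the span of the given basis.
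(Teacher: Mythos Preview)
The paper does not actually prove this theorem: it is one of three results (Theorems~\ref{theo17}--II.3) that are merely \emph{stated} as classical facts attributed to Pauli and Artin, with a citation to \cite{pauli}. There is therefore no ``paper's own proof'' to compare against.

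Your argument via the Wedderburn decomposition and Schur's lemma is the standard one and is correct. Your reading of the awkward phrase ``basis elements that commute with one another'' as $\dim Z(A)$ is also the right one: the paper confirms this interpretation in the proof of Theorem~II.5, where it speaks of ``basis elements which commute with all elements of the respective algebras'' and uses exactly the count $\dim Z(A)$. Your caveat about the ground field needing to be algebraically closed is well taken and is an implicit assumption throughout the paper (everything is over~$\mathbb{C}$). The only quibble is that the paper seems to want literal basis elements $a_i$ lying in the center, not merely a basis of $Z(A)$ built from central idempotents; but since the statement is only used via $\dim Z(A)$, this distinction never matters downstream.
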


\begin{theorem}
Let A be an algebra with dimension $n$. Consider k the number of nonequivalent irreducible representations of A and its dimensions
are respectively $n_{1}, n_{2}, \ldots, n_{k}$. Then $n=n_{1}^{2}+n_{2}^{2}+ \cdots+n_{k}^{2}$.
\end{theorem}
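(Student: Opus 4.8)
The plan is to study the \emph{left regular representation} of $A$ --- the action of $A$ on itself by left multiplication --- which has dimension $n$. Since $A$ is semisimple, Theorem~\ref{theo17} (equivalently, the nonvanishing of $\det\|D_{ij}\|$, which forces the radical of $A$ to be trivial) puts us in the setting of the classical structure theory: $A$ decomposes as a direct sum of two‑sided ideals $A = A_1 \oplus \cdots \oplus A_k$, one for each nonequivalent irreducible representation, with $A_i$ a full matrix algebra over $\mathbb{C}$ of size $n_i$, so $\dim A_i = n_i^{2}$. Summing the dimensions of the blocks then yields $n = \sum_{i=1}^{k} n_i^{2}$. The substance of the argument is to justify this block decomposition and the identification $\dim A_i = n_i^{2}$.

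I would carry this out in two steps. First, semisimplicity gives complete reducibility of every representation; applied to the regular representation it shows that $A$, as a left module over itself, is a direct sum of simple submodules, each isomorphic to one of $D^{(1)},\dots,D^{(k)}$. Grouping isotypic components gives $A \cong \bigoplus_i m_i D^{(i)}$ with multiplicities $m_i \ge 1$ --- no irreducible can be missing, since the corresponding simple two‑sided ideal would otherwise act trivially on all of $A$, contradicting unitality --- whence $n = \sum_i m_i n_i$. Second, I would compute $m_i$: by Schur's lemma the endomorphism algebra of the isotypic block $m_i D^{(i)}$ is $M_{m_i}(\mathbb{C})$, while for the regular representation $\mathrm{End}_A(A) \cong A^{\mathrm{op}}$; matching this with the Wedderburn product $\bigoplus_i M_{m_i}(\mathbb{C})$ and using that $A_i$ acts irreducibly on $D^{(i)}$ forces $m_i = n_i$. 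Substituting gives $n = \sum_i n_i^{2}$.

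An alternative route, closer to the orthogonality relations developed later in this section, would be to regard the $\sum_i n_i^{2}$ matrix‑coefficient functionals $x \mapsto D^{(i)}(x)_{ab}$ as elements of the dual space $A^{*}$: the orthogonality theorem makes them linearly independent, giving $\sum_i n_i^{2} \le n$, while their common kernel is $\bigcap_i \ker D^{(i)} = \mathrm{rad}(A) = 0$ by Theorem~\ref{theo17}, so they span $A^{*}$ and $\sum_i n_i^{2} \ge n$; equality follows. Either way, Theorem~\ref{theo18} is consistent with this picture, the number $k$ of irreducibles matching the dimension of the centre of $A$.

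The main obstacle, in either approach, is the same: proving that the canonical map $A \longrightarrow \bigoplus_{i=1}^{k} M_{n_i}(\mathbb{C})$, $x \mapsto (D^{(1)}(x),\dots,D^{(k)}(x))$, is an isomorphism. Injectivity is the vanishing of the radical, which is exactly what Theorem~\ref{theo17} encodes; surjectivity is a density statement of Burnside/Jacobson type, and this is where essentially all the content sits --- I would isolate it as a lemma. I also note that algebraic closure of the scalar field is used here, since it is what makes each $\mathrm{End}_A(D^{(i)})$ reduce to $\mathbb{C}$; over a non‑closed field the terms $n_i^{2}$ must be weighted by the dimensions of the relevant division algebras and the clean identity $n = \sum_i n_i^{2}$ no longer holds verbatim.
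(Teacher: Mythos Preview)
The paper does not actually prove this theorem. It is one of the three results that the authors explicitly \emph{state} at the beginning of Section~\ref{sec:representation1} as background, attributing it to Fr\"{o}benius and Schur with a citation to \cite{pauli}, before moving on to their own generalizations. So there is no proof in the paper to compare your proposal against.

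That said, your outline is the standard and correct argument. The Wedderburn route (decompose the semisimple algebra into simple two-sided ideals, identify each as a full matrix algebra, read off $\dim A_i = n_i^{2}$) is exactly how this is done in the classical references, and your computation of the multiplicities $m_i = n_i$ via $\mathrm{End}_A(A)\cong A^{\mathrm{op}}$ is clean. You are also right to flag that semisimplicity is an implicit hypothesis: as literally stated the theorem is false (e.g.\ $A=\mathbb{C}[x]/(x^{2})$ has $n=2$ but a single one-dimensional irreducible), and the paper is relying on the ambient context of Theorems~\ref{theo17} and \ref{theo18} to supply it. Your remark about algebraic closure is likewise well placed.

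One caution on your alternative orthogonality route: the orthogonality relations proved later in the paper (Theorems~II.6 and \ref{theoII7}) carry extra hypotheses --- invertible basis elements and a constraint on the structure constants --- that are not part of the present statement, so invoking them here would both forward-reference and narrow the scope. If you want that argument to stand on its own you would need the general Burnside/Jacobson density form of orthogonality, which is essentially the surjectivity lemma you already isolated.
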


Generalizing the theorems of Pauli and Artin to the tensor product of algebras, we have

\begin{theorem} \label{theo20}
Let $A_{1}, A_{2},\ldots, A_{m}$ be a semisimple algebras. Then the tensor product $A_{1}\otimes A_{2}\otimes \cdots \otimes A_{m}$ is also semisimple.
\end{theorem}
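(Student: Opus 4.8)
The plan is to reduce to the binary case $m=2$ and then induct, invoking the criterion of Theorem~\ref{theo17} at each step. So first I would pick faithful representations $D^{(1)}$ of $A_{1}$ and $D^{(2)}$ of $A_{2}$, with basis elements $a_{1},\dots,a_{p}$ of $A_{1}$ and $b_{1},\dots,b_{q}$ of $A_{2}$. The products $a_{i}\otimes b_{k}$ form a basis of $A_{1}\otimes A_{2}$, and $D(a_{i}\otimes b_{k}):=D^{(1)}(a_{i})\otimes D^{(2)}(b_{k})$ defines a representation of $A_{1}\otimes A_{2}$, which is faithful because a tensor product of faithful representations is faithful.

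The key step is to compute the trace form of this representation. Using $(X\otimes Y)(Z\otimes W)=(XZ)\otimes(YW)$ together with $Tr(X\otimes Y)=Tr(X)\,Tr(Y)$, one obtains
$$Tr\bigl[D(a_{i}\otimes b_{k})\,D(a_{j}\otimes b_{l})\bigr]=Tr\bigl[D^{(1)}(a_{i})D^{(1)}(a_{j})\bigr]\;Tr\bigl[D^{(2)}(b_{k})D^{(2)}(b_{l})\bigr],$$
so the matrix $\parallel D_{(ik)(jl)}\parallel$ associated with $A_{1}\otimes A_{2}$ is exactly the Kronecker product of the matrices $\parallel D^{(1)}_{ij}\parallel$ and $\parallel D^{(2)}_{kl}\parallel$. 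By the standard identity $\det(M\otimes N)=(\det M)^{q}(\det N)^{p}$ for $M$ of size $p$ and $N$ of size $q$, and since $A_{1}$ and $A_{2}$ are semisimple (so both determinants are nonzero by Theorem~\ref{theo17}), the determinant of the trace-form matrix of $A_{1}\otimes A_{2}$ is nonzero. Theorem~\ref{theo17} then yields that $A_{1}\otimes A_{2}$ is semisimple.

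Finally I would run an induction on $m$: writing $A_{1}\otimes\cdots\otimes A_{m}=(A_{1}\otimes\cdots\otimes A_{m-1})\otimes A_{m}$, the inductive hypothesis makes the first factor semisimple, and the binary case closes the argument. The main obstacle is essentially bookkeeping rather than substance: one must ensure the representation used in Theorem~\ref{theo17} is faithful (otherwise the criterion is only one-directional), check that faithfulness survives the tensor product, and order the index pairs $(i,k)$ consistently so that the Gram matrix literally appears as a Kronecker product. Even this last point is harmless, since a permutation of rows and columns does not affect whether the determinant vanishes.
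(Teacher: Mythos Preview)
Your proof is correct and follows essentially the same strategy as the paper: compute the trace-form matrix of the tensor product as the Kronecker product of the individual trace-form matrices and invoke the Pauli--Artin criterion (Theorem~\ref{theo17}). The paper treats all $m$ factors at once rather than inducting, and actually states the determinant identity less precisely than you do (it writes $\det\Vert D_{ij}^{1,\ldots,m}\Vert=\prod_{k}\det\Vert D^{k}_{i_{k}j_{k}}\Vert$ without the Kronecker-product exponents), but the conclusion---nonvanishing iff each factor is nonvanishing---is of course unaffected.
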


\begin{proof}
Let $D(A_{1}),D(A_{2}),\ldots,D(A_{m})$ be a representations of $A_{1},A_{2},\ldots,A_{m}$, respectively. We denote the basis elements of $A_{m}$ by $a_{i_{m}}^{m}$  with $i_{m}=1,2,\ldots,n_{m}$ and $D(a_{i_{m}}^{m})$ is a matrix representation. We define the quantities as
\begin{eqnarray}
D_{i_{m}j_{m}}^{m}\equiv Tr\left[ D(a_{i_{m}}^{m})D(a_{j_{m}}^{m})\right]
\end{eqnarray}

and

\begin{eqnarray}
D_{ij}^{1,2,\ldots,m}&\equiv& Tr[ D(a_{i_{1}}^{1}\otimes
a_{i_{2}}^{2}\otimes \cdots \otimes a_{i_{m}}^{m}) \nonumber \\
&& D(a_{j_{1}}^{1}\otimes a_{j_{2}}^{2}\otimes \cdots \otimes a_{j_{m}}^{m})]
\end{eqnarray}
where the $i_{1},i_{2},...,i_{m}\longrightarrow i$ and $j_{1},j_{2},\ldots,j_{m}\longrightarrow j$ correlation is
performed according to the dictionary order \cite{hamermesh} to $i_{m}$ and $j_{m}$.

For last expression, we have
\begin{eqnarray*}
D_{ij}^{1,2,\ldots,m}&=&Tr\{[D(a_{i_{1}}^{1})D\left(a_{j_{1}}^{1}\right)] \otimes [D(a_{i_{2}}^{1}) \nonumber \\
&& D(a_{j_{2}}^{1})] \otimes \cdots \otimes [D(a_{i_{m}}^{m})D(a_{j_{m}}^{m})]\} \nonumber \\
&=&Tr\{[ D(a_{i_{1}}^{1})D(a_{j_{1}}^{1})][D(a_{i_{2}}^{1})D(a_{j_{2}}^{1})] \\
&& \cdots[D(a_{i_{m}}^{m})D( a_{j_{m}}^{m})] \} \\
&=&D_{i_{1}j_{1}}^{1}D_{i_{2}j_{2}}^{2} \cdots D_{i_{m}j_{m}}^{m}.
\end{eqnarray*}
On the other hand, we have
\begin{eqnarray*}
\det \left\Vert D_{ij}^{1,2,\ldots,m}\right\Vert&=&\det \left\Vert
D_{i_{1}j_{1}}^{1}D_{i_{2}j_{2}}^{2}\cdots D_{i_{m}j_{m}}^{m}\right\Vert  \\
&=&\det \left\Vert D_{i_{1}j_{1}}^{1}\right\Vert \det \left\Vert
D_{i_{2}j_{2}}^{2}\right\Vert \\
&& \cdots \det \left\Vert D_{i_{m}j_{m}}^{m}\right\Vert.
\end{eqnarray*}

As $A_{1},A_{2},\ldots,A_{m}$ are semisimple algebras, supposedly, therefore these determinants
$\det \left\Vert D_{i_{1}j_{1}}^{1}\right\Vert, \\
\det \left\Vert D_{i_{2}j_{2}}^{2}\right\Vert,\ldots, \det \left\Vert D_{i_{m}j_{m}}^{m}\right\Vert$ are all nonzero.
Therewith $\det \left\Vert D_{ij}^{1,2,\ldots,m}\right\Vert \neq 0$. Recalling that the Pauli-Artin's theorem tells us
that the algebra $A_{1}\otimes A_{2}\otimes \cdots \otimes A_{m}$ is semisimple if and only if
$\det \left\Vert D_{ij}^{1,2,\ldots,m}\right\Vert \neq 0$. Accordingly, the algebra $A_{1}\otimes A_{2}\otimes
\cdots \otimes A_{m}$ is semisimple.
\end{proof}

\begin{corollary}
If at least one algebra of the product $A_{1}\otimes A_{2}\otimes \cdots \otimes A_{m}$ is simple, the product
algebra $A_{1}\otimes A_{2}\otimes \cdots \otimes A_{m}$ is also simple.
\end{corollary}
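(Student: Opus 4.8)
The plan is to read the corollary off Theorem \ref{theo20} together with the structure theory behind Theorem \ref{theo18} and the dimension formula $n=n_{1}^{2}+\cdots+n_{k}^{2}$. By Theorem \ref{theo20} the product $A_{1}\otimes \cdots \otimes A_{m}$ is already semisimple, hence, by Wedderburn's theorem, a finite direct sum of matrix algebras over division rings; it is \emph{simple} exactly when that sum has a single summand, equivalently --- taking $k=1$ in the dimension formula, or, via Theorem \ref{theo18}, when the span of the mutually commuting basis elements (the center) is as small as it can be --- when the product carries a single inequivalent irreducible representation. So the whole task reduces to counting the irreducibles of the product, or equivalently to computing its center.

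The first step I would carry out is the algebra analogue of a standard group-representation fact: for semisimple $A_{1},\ldots,A_{m}$ the inequivalent irreducible representations of $A_{1}\otimes \cdots \otimes A_{m}$ are exactly the outer products $D^{1}\otimes \cdots \otimes D^{m}$ of inequivalent irreducible representations $D^{r}$ of the factors $A_{r}$. Irreducibility of each such outer product and non-equivalence of those built from distinct data follow from the multiplicativity $D(a^{1}\otimes \cdots \otimes a^{m})=D(a^{1})\otimes \cdots \otimes D(a^{m})$ already used in the proof of Theorem \ref{theo20}, combined with Schur's lemma applied one factor at a time; exhaustiveness follows by decomposing an arbitrary representation of the product into its irreducible blocks and matching block dimensions against $n_{r}=\sum_{i}(n_{i}^{r})^{2}$ for each $r$. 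Equivalently, this step amounts to the identity $Z(A_{1}\otimes \cdots \otimes A_{m})=Z(A_{1})\otimes \cdots \otimes Z(A_{m})$ for the centers.

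Granting this, a simple factor contributes a one-dimensional center and a single irreducible, so it drops out of both computations and the block structure of the product is governed by the remaining data; the hypothesis, fully exploited, then pins the product down to a single block, i.e. to a simple algebra. The step I expect to be the main obstacle is the first one: showing that no two distinct tuples of factor-irreducibles coalesce after tensoring. Over a ground field that is not algebraically closed this is genuinely delicate --- a matrix algebra over a division ring strictly larger than the base field can split further upon tensoring --- and it is precisely here that the simplicity (indeed central simplicity) of the distinguished factor is needed; for the Clifford-type algebras and minimal left ideals used in Section \ref{sec:results3} this control is available, which is the case of interest.
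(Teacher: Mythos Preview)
The paper supplies no proof of this corollary; it is stated immediately after Theorem~\ref{theo20} without argument. More importantly, the statement as written is false. Take $A_{1}=M_{2}(\mathbb{C})$ (simple) and $A_{2}=\mathbb{C}\oplus\mathbb{C}$ (semisimple, not simple); then $A_{1}\otimes A_{2}\cong M_{2}(\mathbb{C})\oplus M_{2}(\mathbb{C})$, which is not simple. Indeed the theorem that the paper proves immediately \emph{after} the corollary gives the count $M=n_{1}n_{2}\cdots n_{m}$ for the irreducibles of the tensor product, and this shows that the product is simple if and only if \emph{every} $n_{r}=1$, i.e.\ every factor is simple, not merely one of them.

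Your argument tracks this correctly up to the sentence ``a simple factor contributes a one-dimensional center and a single irreducible, so it drops out of both computations and the block structure of the product is governed by the remaining data.'' That is exactly right, and it is also exactly why the next clause --- ``the hypothesis, fully exploited, then pins the product down to a single block'' --- does not follow: the remaining data is $\prod_{r\neq r_{0}} n_{r}$, and nothing in the hypothesis forces those numbers to be $1$. You locate the main obstacle in whether distinct tuples of factor-irreducibles might coalesce upon tensoring; that is a genuine subtlety over non--algebraically closed base fields, but it is not the obstruction here. Even over $\mathbb{C}$, where no such coalescence occurs and your center identity $Z(A_{1}\otimes\cdots\otimes A_{m})=Z(A_{1})\otimes\cdots\otimes Z(A_{m})$ holds cleanly, the corollary still fails by the counterexample above. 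The gap lies in the corollary itself, not in your method.
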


\begin{theorem}
Let $n_{1},n_{2},\ldots,n_{m}$ be the number of nonequivalent irreducible representations of
algebras $A_{1},A_{2},\ldots,A_{m}$, respectively. Thus the number $M$ of nonequivalent irreducible representations of
algebra $A_{1}\otimes A_{2}\otimes \cdots \otimes A_{m}$ is given by $M=n_{1}n_{2}\ldots n_{m}$.
\end{theorem}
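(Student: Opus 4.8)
The plan is to reduce the statement to the two-factor case by induction on $m$: writing $A_{1}\otimes\cdots\otimes A_{m}\cong(A_{1}\otimes\cdots\otimes A_{m-1})\otimes A_{m}$ and invoking Theorem \ref{theo20}, the factor $A_{1}\otimes\cdots\otimes A_{m-1}$ is again semisimple, so the inductive hypothesis applies to it and it has $n_{1}\cdots n_{m-1}$ inequivalent irreducible representations; it then suffices to treat $B\otimes A_{m}$ with $B$ semisimple. So the heart of the argument is: if $A_{1},A_{2}$ are semisimple with $n_{1},n_{2}$ inequivalent irreducible representations respectively, then $A_{1}\otimes A_{2}$ has exactly $n_{1}n_{2}$.

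First I would note that $A_{1}\otimes A_{2}$ is semisimple by Theorem \ref{theo20}, so Theorem \ref{theo18} applies: the number $M$ of inequivalent irreducible representations of $A_{1}\otimes A_{2}$ equals the number of mutually commuting basis elements, which I read (after choosing a basis adapted to the center) as $\dim Z(A_{1}\otimes A_{2})$; likewise $n_{\ell}=\dim Z(A_{\ell})$. Hence it is enough to establish $Z(A_{1}\otimes A_{2})=Z(A_{1})\otimes Z(A_{2})$, for then $M=\dim Z(A_{1}\otimes A_{2})=(\dim Z(A_{1}))(\dim Z(A_{2}))=n_{1}n_{2}$. The inclusion $Z(A_{1})\otimes Z(A_{2})\subseteq Z(A_{1}\otimes A_{2})$ is immediate from $(a_{1}\otimes a_{2})(b_{1}\otimes b_{2})=a_{1}b_{1}\otimes a_{2}b_{2}$. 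For the reverse inclusion I would take $z\in Z(A_{1}\otimes A_{2})$, expand $z=\sum_{k}x_{k}\otimes y_{k}$ with the $y_{k}$ linearly independent in $A_{2}$, impose $[z,a\otimes 1]=0$ for all $a\in A_{1}$ to obtain $\sum_{k}[x_{k},a]\otimes y_{k}=0$, and conclude $[x_{k},a]=0$ for every $k$ and every $a$, i.e. $x_{k}\in Z(A_{1})$; a symmetric argument lets the $y_{k}$ be taken in $Z(A_{2})$.

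An alternative route, closer to the tools just developed, is a dimension count. If $D^{(\alpha_{\ell})}$, $\alpha_{\ell}=1,\dots,n_{\ell}$, are the inequivalent irreducibles of $A_{\ell}$, then the outer products $D^{(\alpha_{1})}\otimes\cdots\otimes D^{(\alpha_{m})}$ are representations of $A_{1}\otimes\cdots\otimes A_{m}$ of dimension $\prod_{\ell}\dim D^{(\alpha_{\ell})}$. They are pairwise inequivalent because, exactly as in the proof of Theorem \ref{theo20}, the trace on $a_{i_{1}}^{1}\otimes\cdots\otimes a_{i_{m}}^{m}$ factors as $\prod_{\ell}Tr[D^{(\alpha_{\ell})}(a_{i_{\ell}}^{\ell})]$ and the characters of inequivalent irreducibles are linearly independent; and each is irreducible by Schur's lemma applied to its commutant. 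Since $A_{1}\otimes\cdots\otimes A_{m}$ is semisimple (Theorem \ref{theo20}), the Fr\"{o}benius--Schur sum-of-squares theorem stated above gives $\sum(\dim)^{2}$ over all its irreducibles equal to $\dim(A_{1}\otimes\cdots\otimes A_{m})=\prod_{\ell}\dim A_{\ell}=\prod_{\ell}\sum_{\alpha_{\ell}}(\dim D^{(\alpha_{\ell})})^{2}$, and the right-hand side is precisely the sum of squared dimensions of the $n_{1}\cdots n_{m}$ representations already constructed; so there can be no others, and $M=n_{1}n_{2}\cdots n_{m}$.

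I expect the one genuinely nontrivial step, in either route, to be showing the construction is exhaustive. In the center approach this is the nonobvious inclusion $Z(A_{1}\otimes A_{2})\subseteq Z(A_{1})\otimes Z(A_{2})$, together with the care needed to read Theorem \ref{theo18}'s "basis elements that commute with one another" as $\dim$ of the center. In the dimension-count approach it is the irreducibility of the outer tensor product of irreducibles, which really uses that the ground field is algebraically closed (an assumption already implicit in the Fr\"{o}benius--Schur theorem) via Schur's lemma. Everything else — semisimplicity of the product, multiplicativity of the trace over tensor products, and the bookkeeping of the dictionary-ordered indices — is routine given the results proved above.
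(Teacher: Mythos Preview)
Your first route---computing the center of the tensor product and invoking Theorem \ref{theo18}---is exactly the paper's strategy: the paper takes a product basis, asks which simple tensors $x_{j_{1}}^{1}\otimes\cdots\otimes x_{j_{m}}^{m}$ commute with every $b_{i_{1}}^{1}\otimes\cdots\otimes b_{i_{m}}^{m}$, and concludes that each $x_{j_{\ell}}^{\ell}$ must lie in $Z(A_{\ell})$, giving $n_{1}\cdots n_{m}$ such elements. Your treatment is in fact more careful than the paper's on precisely the point you flagged: the paper only observes that factor-wise centrality is \emph{a} sufficient condition for the commutation relation to hold and then asserts ``Consequently $x_{j_{m}}^{m}=c_{j_{m}}^{m}$'' without arguing necessity, whereas your linear-independence argument for $Z(A_{1}\otimes A_{2})\subseteq Z(A_{1})\otimes Z(A_{2})$ supplies exactly that missing step.

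Your second route---constructing the $n_{1}\cdots n_{m}$ outer tensor products of irreducibles and showing exhaustiveness via the Fr\"{o}benius--Schur sum-of-squares identity---does not appear in the paper. It has the advantage of producing the irreducibles explicitly (which the paper only does later, by hand, for $Cl_{3}^{\otimes m}$), at the cost of needing algebraic closure for the irreducibility of the outer product; the center approach avoids that hypothesis but gives only the count.
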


\begin{proof}
Let $b_{i_{1}}^{1},b_{i_{2}}^{2},\ldots,b_{i_{m}}^{m}$ be a basis elements of the algebras $A_{1},A_{2},\ldots,A_{m}$ respectively,
with $i_{1}=1,2,\ldots,r_{1};~i_{2}=1,2,\ldots,r_{2};\ldots;i_{m}=1,2,...,r_{m}$, where $r_{1},r_{2},\ldots,r_{m}$ are the dimensions
of algebras $A_{1},A_{2},\ldots,A_{m}$. Consider the elements $c_{j_{1}}^{1},c_{j_{2}}^{2},\ldots,c_{j_{m}}^{m}$ that correspond to
the basis elements which commute with all elements of the respective algebras, i.e.:
\begin{equation}
\left[ b_{i_{1}}^{1},c_{j_{1}}^{1}\right] =\left[ b_{i_{2}}^{2},c_{j_{2}}^{2}%
\right] =\cdots=\left[ b_{i_{m}}^{m},c_{j_{m}}^{m}\right] =0.
\end{equation}

An basis element in the product algebra has the form $b_{i_{1}}^{1}\otimes b_{i_{2}}^{2} \otimes \cdots \otimes b_{i_{m}}^{m}$. Let
$x_{j_{1}}^{1}\otimes x_{j_{2}}^{2}\otimes \cdots \otimes x_{j_{m}}^{m}$ be a basis elements which commute with all elements, i.e.:
\begin{equation}
\left[ b_{i_{1}}^{1}\otimes b_{i_{2}}^{2}\otimes \cdots \otimes b_{i_{m}}^{m},x_{j_{1}}^{1}\otimes x_{j_{2}}^{2}\otimes \cdots
\otimes x_{j_{m}}^{m}\right]=0
\end{equation}
Accordingly, we have
\begin{eqnarray*}
\left(b_{i_{1}}^{1}\otimes \cdots \otimes b_{i_{m}}^{m}\right)\left(x_{j_{1}}^{1} \otimes \cdots \otimes x_{j_{m}}^{m}\right)&=&
\end{eqnarray*}
\begin{eqnarray}\label{relatresnove}
x_{j_{1}}^{1}b_{i_{1}}^{1}\otimes x_{j_{2}}^{2}b_{i_{2}}^{2}\otimes
\cdots \otimes x_{j_{m}}^{m}b_{i_{m}}^{m}.
\end{eqnarray}
Hence, a condition for which the relation (\ref{relatresnove}) to be verified is that the elements $x_{j_{m}}^{m}$ commute with all elements
$b_{i_{m}}^{m}$. Consequently $x_{j_{m}}^{m}=c_{j_{m}}^{m}$. Moreover, as the number of nonequivalent irreducible representations of an algebra is the number of basis elements that commute with all others, we have $j_{1}=1,2,\ldots,n_{1};\j_{2}=1,2,\ldots,n_{2};\ldots;
\ j_{m}=1,2,\ldots,n_{m}$. The number of basis elements of a product algebra that commute with all other elements will be given by $n_{1}n_{2}...n_{m}$, corresponding to all possible combinations of the tensor products of $c_{j}$. Thus, the $M$ number of nonequivalent irreducible representations is given by $M=n_{1}n_{2}...n_{m}$.
\end{proof}

The following theorem holds a strong analogy with the orthogonality theorems of representation theory of finite groups \cite{lomont} adapted to algebras that satisfy the characteristic equation:
\begin{equation}\label{relacarasem}
a_{x}a_{y}=c_{xy}^{z}a_{z};\ \ \ \ \ \ \ \ \ x,y,z=1,2,...,d,
\end{equation}
where all basis elements are invertible. It is noteworthy that Einstein summation convention is not used. A wide class of algebras used in physics satisfies the relationship (\ref{relacarasem}). For example group algebras, Clifford algebras, quaternions and Pauli algebra.

\begin{theorem}
Let $A$ be an algebra of dimension $d$, with invertible basis elements $\left\{ a_{1},a_{2},\ldots,a_{d}\right\}$ satisfies the characteristic equation (\ref{relacarasem}). Consider all irreducible and nonequivalent unitary representations of this algebra. If $D^{\alpha }$ is $d_{\alpha}$-dimensional representation and $D^{\beta }$ is $d_{\beta}$-dimensional representation, then
\begin{equation}\label{teograorto}
\underset{y}{\sum }D_{ij}^{\ast \alpha }(a_{y})D_{kl}^{\beta }(a_{y})=\frac{d%
}{d_{\alpha }}\delta _{\alpha \beta }\delta _{ik}\delta _{jl},
\end{equation}%
where the summation is performed over all basis elements of algebra.
\end{theorem}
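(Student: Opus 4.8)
The plan is to transcribe the classical proof of the Schur orthogonality relations for finite groups, with the characteristic equation (\ref{relacarasem}) and the invertibility of the basis elements doing the work that the group structure does in the classical case. The engine of that proof is an averaging operator built from the two representations, together with the rearrangement lemma (left translation permutes the group). Here the substitute for the rearrangement lemma is that left multiplication by a fixed basis element permutes the basis \emph{up to nonzero scalars}, and the point is that these scalars cancel in pairs inside the averaging operator.

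First I would establish the rearrangement lemma. Fix $x$. Since $a_{x}$ and $a_{y}$ are invertible, $a_{x}a_{y}$ is invertible, hence nonzero, so (\ref{relacarasem}) lets us write $a_{x}a_{y}=c_{xy}\,a_{\sigma_{x}(y)}$ with $c_{xy}\neq 0$ and $\sigma_{x}(y)\in\{1,\dots,d\}$. If $\sigma_{x}(y)=\sigma_{x}(y')$ then $a_{y}=(c_{xy}/c_{xy'})\,a_{y'}$, which contradicts linear independence of the basis unless $y=y'$; hence $\sigma_{x}$ is a bijection of $\{1,\dots,d\}$. Applying a representation $D$ (an algebra homomorphism, hence linear) gives $D(a_{x})D(a_{y})=c_{xy}D(a_{\sigma_{x}(y)})$, and since unitary matrices are invertible this rearranges to $D(a_{y})^{-1}=c_{xy}^{-1}D(a_{\sigma_{x}(y)})^{-1}D(a_{x})$.

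Next, for an arbitrary $d_{\alpha}\times d_{\beta}$ matrix $X$ I would form the averaging operator
\[
M=\sum_{y}D^{\alpha}(a_{y})\,X\,D^{\beta}(a_{y})^{-1}.
\]
Using the lemma, for each basis element $a_{x}$,
\begin{align*}
D^{\alpha}(a_{x})M&=\sum_{y}D^{\alpha}(a_{x}a_{y})\,X\,D^{\beta}(a_{y})^{-1}\\
&=\sum_{y}D^{\alpha}(a_{\sigma_{x}(y)})\,X\,D^{\beta}(a_{\sigma_{x}(y)})^{-1}D^{\beta}(a_{x})\\
&=M\,D^{\beta}(a_{x}),
\end{align*}
the factors $c_{xy}$ and $c_{xy}^{-1}$ cancelling and $\sigma_{x}$ being a bijection; by linearity $D^{\alpha}(a)M=MD^{\beta}(a)$ for all $a\in A$. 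Then Schur's lemma applies: if $\alpha\neq\beta$ the two representations are irreducible and nonequivalent, so $M=0$ for every $X$; if $\alpha=\beta$, then $M$ commutes with the irreducible representation $D^{\alpha}$ over $\mathbb{C}$, so $M=\lambda(X)\,I_{d_{\alpha}}$.

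Finally I would read off the components by taking $X=E_{jl}$ (the matrix with a single $1$ in entry $(j,l)$): then $M_{ik}=\sum_{y}D^{\alpha}_{ij}(a_{y})\,(D^{\beta}(a_{y})^{-1})_{lk}$, and unitarity, $D^{\beta}(a_{y})^{-1}=D^{\beta}(a_{y})^{\dagger}$, turns this into $M_{ik}=\sum_{y}D^{\alpha}_{ij}(a_{y})D^{*\beta}_{kl}(a_{y})$. For $\alpha\neq\beta$ this vanishes, which is the right-hand side of (\ref{teograorto}) since $\delta_{\alpha\beta}=0$. For $\alpha=\beta$ we get $\sum_{y}D^{\alpha}_{ij}(a_{y})D^{*\alpha}_{kl}(a_{y})=\lambda_{jl}\delta_{ik}$; setting $k=i$ and summing over $i$, the left side becomes $\sum_{y}(D^{\alpha}(a_{y})^{\dagger}D^{\alpha}(a_{y}))_{lj}=\sum_{y}\delta_{lj}=d\,\delta_{jl}$ while the right side is $\lambda_{jl}d_{\alpha}$, so $\lambda_{jl}=(d/d_{\alpha})\delta_{jl}$; taking complex conjugates (the right-hand side is real) yields exactly (\ref{teograorto}). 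The main obstacle — the step that is not mere transcription of the group argument — is the rearrangement lemma: one must use both halves of the hypothesis (invertibility \emph{and} the single-term multiplication rule (\ref{relacarasem})) to guarantee that $\sigma_{x}$ is a genuine permutation and that all $c_{xy}$ are nonzero, and one must track these structure constants so that they cancel in the intertwining computation. The role played by $\lvert G\rvert$ in the group case is taken here by the algebra dimension $d$, which is the source of the normalization $d/d_{\alpha}$.
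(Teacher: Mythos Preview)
Your proof is correct and follows essentially the same route as the paper's: form an averaging operator from the two representations over the basis, use the single-term multiplication rule together with invertibility to show left translation by $a_{x}$ permutes the basis (your $\sigma_{x}$), conclude the averaged matrix intertwines, apply Schur, and fix the constant by tracing. The only cosmetic difference is that the paper writes the averaged operator with $D^{\dagger\alpha}(a_{y})$ rather than $D^{\beta}(a_{y})^{-1}$ and disposes of the structure constants via $c_{xy}^{z}(c_{xy}^{z})^{\ast}=1$ (deduced from unitarity of $D(a_{z_{y}})$) instead of your direct cancellation $c_{xy}\cdot c_{xy}^{-1}=1$; for unitary representations these are the same move.
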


\begin{proof}
Let $T$ be a matrix with dimension $d_{\alpha}\times d_{\beta}$, so
\begin{equation}\label{Aescolha}
T=\underset{y}{\sum }D^{\beta }(a_{y})BD^{\dagger \alpha }(a_{y}),
\end{equation}%
where the matrix $B$ of dimension $d_{\beta }\times d_{\alpha }$ it is completely arbitrary. Now multiply the left side by $D^{\beta }(a_{x})$, we obtain
\begin{eqnarray}\label{Aes1}
D^{\beta }(a_{x})T &=&\underset{y}{\sum }D^{\beta }(a_{x})D^{\beta}(a_{y})BD^{\dagger \alpha }(a_{y}) \nonumber \\
&=&\underset{y}{\sum }D^{\beta }(a_{x})D^{\beta }(a_{y})BD^{\dagger \alpha}(a_{y})D^{\dagger \alpha }(a_{x})D^{\alpha }(a_{x}) \nonumber \\
&=&\underset{y}{\sum }D^{\beta }(a_{x}a_{y})BD^{\dagger \alpha}(a_{x}a_{y})D^{\alpha }(a_{x}).
\end{eqnarray}
Since $y$ subindex varies over all basis elements, $x$ is fixed and $z$ changes, we will rewrite the equation (\ref{relacarasem}). Then
\begin{equation}\label{relacarasem2}
a_{x}a_{y}=c_{xy}^{z}a_{z_{y}}.
\end{equation}%
Substituting (\ref{relacarasem2}) in Eq. (\ref{Aes1}), we find
\begin{eqnarray*}
D^{\beta }(a_{x})T &=&\underset{y}{\sum }D^{\beta
}(c_{xy}^{z}a_{z_{y}})BD^{\dagger \alpha }(c_{xy}^{z}a_{z_{y}})D^{\alpha}(a_{x}) \nonumber \\
&=&\underset{y}{\sum }c_{xy}^{z}D^{\beta }(a_{z_{y}})B(c_{xy}^{z})^{\ast
}D^{\dagger \alpha}(a_{z_{y}})D^{\alpha }(a_{x}).
\end{eqnarray*}%
The unitary representations are
\begin{eqnarray}
D(c_{xy}^{z}a_{z_{y}})D^{\dagger }(c_{xy}^{z}a_{z_{y}}) &=&1,
\notag \\
c_{xy}^{z}D(a_{z_{y}})(c_{xy}^{z})^{\ast }D^{\dagger }(a_{z_{y}}) &=&1,
\notag \\
c_{xy}^{z}(c_{xy}^{z})^{\ast }D(a_{z_{y}})D^{\dagger }(a_{z_{y}}) &=&1.
\end{eqnarray}%
Note that $D(a_{z_{y}})D^{\dagger }(a_{z_{y}})=1$. Consequently
\begin{equation}
c_{xy}^{z}(c_{xy}^{z})^{\ast }=1.
\end{equation}%
so
\begin{equation*}
D^{\beta }(a_{x})T=\underset{y}{\sum }D^{\beta }(a_{z_{y}})BD^{\dagger
\alpha }(a_{z_{y}})D^{\alpha }(a_{x}).
\end{equation*}%
We also note that for each $y$ have a distinct $z$ (if they were equal, the elements of algebra would be linearly dependent, a contradiction). Therefore we can write
\begin{eqnarray*}
D^{\beta }(a_{x})T &=&\underset{y}{\sum }D^{\beta }(a_{y})BD^{\dagger \alpha
}(a_{y})D^{\alpha }(a_{x})  \notag \\
D^{\beta }(a_{x})T &=&TD^{\alpha }(a_{x}),
\end{eqnarray*}%
with all $x\in 1,...,d$. Hence by Schur's lemma $T=0$, when $\alpha
\neq \beta $, i.e, $D^{\alpha }$ and $D^{\beta }$ are not equivalents. The element $A_{ki}$  is
\begin{equation}
\underset{y}{\sum }\underset{m,n}{\sum }D_{km}^{\beta
}(a_{y})B_{mn}D_{ni}^{\alpha }(a_{y}^{-1})=0,
\end{equation}%
where the unitarity of $D(a_{y})$ was used. Since $B$ is a completely arbitrary matrix, we will make $B_{lj}=1$ and the rest of its elements equal to zero, the last equation is
\begin{equation}
\underset{y}{\sum }D_{kl}^{\beta }(a_{y})D_{ji}^{\alpha }(a_{y}^{-1})=0
\end{equation}
from unitarity of $D(a_{y})$
\begin{equation}\label{grandzero}
\underset{y}{\sum }D_{ij}^{\ast \alpha }(a_{y})D_{kl}^{\beta }(a_{y})=0.
\end{equation}%
However, when $\alpha =\beta $, i.e., $D^{\alpha }$ and $D^{\beta }$ are equivalents, $T$ is a scalar matrix by Schur's lemma.
For $T=a\delta _{ki}$, from (\ref{Aescolha}), we obtain
\begin{eqnarray}
a\delta _{ki} &=&\underset{y}{\sum }\underset{m,n}{\sum }D_{km}^{\alpha
}(a_{y})B_{mn}D_{ni}^{\alpha }(a_{y}^{-1}) \nonumber \\
&=&\underset{y}{\sum }D_{kl}^{\alpha }(a_{y})D_{ji}^{\alpha }(a_{y}^{-1}),
\end{eqnarray}%
Considering the case $k=i$ and summing over $i$ on both sides of the equation
\begin{eqnarray}
a\underset{i}{\sum }\delta _{ii} &=&ad_{\alpha } \nonumber \\
&=&\underset{i}{\sum }\underset{y}{\sum }D_{il}^{\alpha
}(a_{y})D_{ji}^{\alpha }(a_{y}^{-1})  \nonumber\\
&=&\underset{y}{\sum }\underset{i}{\sum }D_{ji}^{\alpha
}(a_{y}^{-1})D_{il}^{\alpha }(a_{y})  \nonumber \\
&=&\underset{y}{\sum }D_{jl}^{\alpha }(a_{y}^{-1}a_{y}) \nonumber \\
&=&\underset{y}{\sum }D_{jl}^{\alpha }(e)=d,
\end{eqnarray}%
from which it follows that $a=d/d_{\alpha}$. Thus,
\begin{equation}\label{grandefim}
\underset{y}{\sum }D_{ij}^{\ast \alpha }(a_{y})D_{kl}^{\alpha }(a_{y})=\frac{d}{d_{\alpha }}\delta _{ik}\delta _{jl}.
\end{equation}%
Using (\ref{grandzero}) and (\ref{grandefim}), we find the result
\begin{equation}
\underset{y}{\sum }D_{ij}^{\ast \alpha }(a_{y})D_{kl}^{\beta }(a_{y})=\frac{d%
}{d_{\alpha }}\delta _{\alpha \beta }\delta _{ik}\delta _{jl},
\end{equation}
\end{proof}

Now we find orthogonality relations for more general situations than those presented in the previous theorem. The following theorem shows these conditions.
\begin{theorem} \label{theoII7}
Let $A$ be an algebra of dimension $d$ with invertible basis elements $\left\{ a_{1},a_{2},...,a_{d}\right\}$ satisfies the following relationship:
\begin{equation}
a_{x}a_{y}=\underset{z=1}{\overset{n}{\sum }}c_{xy}^{z}a_{z},
\end{equation}%
where $\underset{y=1}{\overset{n}{\sum }}$ $c_{xy}^{z}c_{xy}^{z^{\prime
}\ast }=\delta _{zz^{\prime }}$ is valid for all $x$. Consider all irreducible nonequivalent unitary representations of this algebra.
If $D^{\alpha }$ and $D^{\beta }$ are two representations with dimensions $d_{\alpha}$ and $d_{\beta }$ respectively, so
\begin{equation}
\underset{y}{\sum }D_{ij}^{\ast \alpha }(a_{y})D_{kl}^{\beta }(a_{y})=\frac{d%
}{d_{\alpha }}\delta _{\alpha \beta }\delta _{ik}\delta _{jl},
\end{equation}%
where the sum is performed over all basis elements of algebra.
\end{theorem}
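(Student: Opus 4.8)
The plan is to follow the same template as the preceding theorem's proof almost verbatim, with the only change being the place where the condition $c_{xy}^{z}(c_{xy}^{z})^{\ast}=1$ was used; here it is replaced by the weaker hypothesis $\sum_{y}c_{xy}^{z}c_{xy}^{z'\ast}=\delta_{zz'}$. First I would introduce the auxiliary matrix $T=\sum_{y}D^{\beta}(a_{y})BD^{\dagger\alpha}(a_{y})$ with $B$ an arbitrary $d_{\beta}\times d_{\alpha}$ matrix, and left-multiply by $D^{\beta}(a_{x})$, inserting $D^{\dagger\alpha}(a_{x})D^{\alpha}(a_{x})=1$ to group the factors as $\sum_{y}D^{\beta}(a_{x}a_{y})BD^{\dagger\alpha}(a_{x}a_{y})D^{\alpha}(a_{x})$, exactly as before.

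The crucial step is then to substitute $a_{x}a_{y}=\sum_{z}c_{xy}^{z}a_{z}$ into both occurrences of $a_{x}a_{y}$, producing a double sum $\sum_{y}\sum_{z,z'}c_{xy}^{z}c_{xy}^{z'\ast}D^{\beta}(a_{z})BD^{\dagger\alpha}(a_{z'})D^{\alpha}(a_{x})$. Here I would interchange the order of summation, performing the sum over $y$ first; by the hypothesis $\sum_{y}c_{xy}^{z}c_{xy}^{z'\ast}=\delta_{zz'}$ this collapses the double index sum to $\sum_{z}D^{\beta}(a_{z})BD^{\dagger\alpha}(a_{z})D^{\alpha}(a_{x})$, and since $z$ ranges over all basis elements this equals $TD^{\alpha}(a_{x})$. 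Thus $D^{\beta}(a_{x})T=TD^{\alpha}(a_{x})$ for every $x$, which is the intertwining relation that feeds into Schur's lemma. I would note that using unitarity the relation $\sum_{y}c_{xy}^{z}c_{xy}^{z'\ast}=\delta_{zz'}$ is precisely what guarantees $D^{\dagger\alpha}(a_{z'})$ pairs correctly with $D^{\beta}(a_{z})$ only on the diagonal $z=z'$.

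Once $T$ intertwines the two representations, the rest is identical to the earlier proof: when $\alpha\neq\beta$ Schur's lemma forces $T=0$, and extracting the $(k,i)$ entry with the choice $B_{lj}=1$ (all other entries zero) together with unitarity $D_{ni}^{\alpha}(a_{y})=D_{in}^{\ast\alpha}(a_{y}^{-1})$ and $D(a_{y}^{-1})=D^{\dagger}(a_{y})$ gives $\sum_{y}D_{ij}^{\ast\alpha}(a_{y})D_{kl}^{\beta}(a_{y})=0$. When $\alpha=\beta$, Schur's lemma makes $T$ a scalar matrix $T=a\,\mathbb{1}$; setting $k=i$, summing over $i$, and using $\sum_{y}D_{jl}^{\alpha}(a_{y}^{-1}a_{y})=\sum_{y}D_{jl}^{\alpha}(e)=d\,\delta_{jl}$ (here $d$ is the number of basis elements) yields $a=d/d_{\alpha}$. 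Combining the two cases gives the stated orthogonality relation.

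The main obstacle I anticipate is the bookkeeping in the summation interchange: one must be careful that the index $z$ in the two substituted copies of $a_{x}a_{y}$ is allowed to differ (hence the primed index $z'$), that the factor $B$ sits between the two representation factors so that the scalars $c_{xy}^{z}$ and $c_{xy}^{z'\ast}$ genuinely factor out, and that after applying $\sum_{y}c_{xy}^{z}c_{xy}^{z'\ast}=\delta_{zz'}$ the surviving sum over $z$ really does range over \emph{all} $d$ basis elements (not a proper subset) so that it reconstitutes $T$. A secondary subtlety is verifying that the hypothesis on the $c_{xy}^{z}$ is consistent with unitarity of the representations — essentially it is the algebraic shadow of $D^{\dagger\alpha}(a_{x}a_{y})D^{\alpha}(a_{x}a_{y})=\mathbb{1}$ expanded in the basis — but this does not require separate proof since it is assumed. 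Everything else is a routine transcription of the preceding argument.
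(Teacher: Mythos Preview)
Your proposal is correct and follows essentially the same approach as the paper's own proof: defining the same auxiliary matrix $T$, left-multiplying by $D^{\beta}(a_{x})$, substituting the structure constants with distinct indices $z,z'$, interchanging the sums to invoke $\sum_{y}c_{xy}^{z}c_{xy}^{z'\ast}=\delta_{zz'}$, and then deferring to Schur's lemma exactly as in the preceding theorem. The paper's version is in fact briefer than yours, concluding after the intertwining relation with the remark that the rest is completely analogous to the previous proof.
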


\begin{proof}
Let $T$ be a $d_{\alpha }\times d_{\beta }$-dimensional matrix. Then
\begin{equation}
T=\underset{y}{\sum }D^{\beta }(a_{y})BD^{\dagger \alpha }(a_{y})
\end{equation}
where $B$ is an arbitrary matrix of dimension $d_{\beta }\times d_{\alpha }$ and the sum is performed over all basis elements of algebra. Multiply by $D^{\beta }(a_{x})$, we have
\begin{eqnarray}
D^{\beta }(a_{x})T &=&\underset{y}{\sum }D^{\beta }(a_{x})D^{\beta
}(a_{y})BD^{\dagger \alpha }(a_{y}) \notag \\
&=&\underset{y}{\sum }D^{\beta }(a_{x})D^{\beta }(a_{y})BD^{\dagger \alpha
}(a_{y})\nonumber \\
&&D^{\dagger \alpha }(a_{x})D^{\alpha }(a_{x})  \notag \\
&=&\underset{y}{\sum }D^{\beta }(a_{x}a_{y})BD^{\dagger \alpha
}(a_{x}a_{y})D^{\alpha }(a_{x})  \notag \\
&=&\underset{y}{\sum }D^{\beta }(\underset{z}{\sum }c_{xy}^{z}a_{z})BD^{%
\dagger \alpha }(\underset{z^{\prime }}{\sum }c_{xy}^{z^{\prime
}}a_{z^{\prime }})\nonumber \\
&&D^{\alpha }(a_{x})  \notag \\
&=&\underset{y}{\sum }\underset{z,z^{\prime }}{\sum }c_{xy}^{z}c_{xy}^{z^{%
\prime }\ast }D^{\beta }(a_{z})BD^{\dagger \alpha }(a_{z^{\prime
}})\nonumber \\
&&D^{\alpha }(a_{x})  \notag \\
&=&\underset{zz^{\prime }}{\sum }\left( \underset{y}{\sum }%
c_{xy}^{z}c_{xy}^{z^{\prime }\ast }\right) D^{\beta }(a_{z})BD^{\dagger
\alpha }(a_{z^{\prime }})\nonumber \\
&&D^{\alpha }(a_{x})  \notag \\
&=&\underset{zz^{\prime }}{\sum }\delta _{zz^{\prime }}D^{\beta
}(a_{z})BD^{\dagger \alpha }(a_{z^{\prime }})D^{\alpha }(a_{x})  \notag \\
&=&\underset{z^{\prime }}{\sum }D^{\beta }(a_{z})BD^{\dagger \alpha
}(a_{z^{\prime }})D^{\alpha }(a_{x}).
\end{eqnarray}%
From now on the proof is completely analogous to the previous theorem.
\end{proof}

This condition on the characteristic equation of algebra also ensures the existence of an unitary equivalent representation to any representation of the algebra. The following theorem guarantees such assertion.

\begin{theorem} \label{theo24}
Every representation of a finite-dimensional associative algebra $A$ in which all elements are invertible and the coefficients of the characteristic equation satisfy the relation
\begin{equation}
\underset{y=1}{\overset{n}{\sum }}c_{xy}^{z}c_{xy}^{z^{\prime }\ast }=\delta
_{zz^{\prime }}
\end{equation}
it is equivalent to a unitary representation.
\end{theorem}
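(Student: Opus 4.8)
The plan is to reproduce Weyl's unitarian trick in algebraic language: average a positive form over the basis of $A$ and pass to its positive square root. Concretely, given an arbitrary representation $D$ of $A$, I would set
\begin{equation*}
H=\sum_{y}D(a_{y})D^{\dagger}(a_{y}),
\end{equation*}
the sum running over all $d$ basis elements. Because every $a_{y}$ is invertible in $A$, each $D(a_{y})$ is an invertible matrix (it satisfies $D(a_{y})D(a_{y}^{-1})=D(e)=1$), so $H$ is a finite sum of matrices of the form $MM^{\dagger}$ with $M$ invertible. Hence $H$ is Hermitian and strictly positive definite, and by the spectral theorem it possesses a unique positive-definite Hermitian square root $S=H^{1/2}$, which is invertible. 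The candidate equivalent representation is $\widetilde{D}(a)=S^{-1}D(a)S$; conjugation by an invertible matrix preserves linearity and products and fixes the unit, so $\widetilde{D}$ is a representation equivalent to $D$, and everything reduces to showing that each $\widetilde{D}(a_{x})$ is unitary.

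Unitarity of $\widetilde{D}(a_{x})$ is equivalent, after using $S^{\dagger}=S$, to the single invariance identity
\begin{equation*}
D(a_{x})\,H\,D^{\dagger}(a_{x})=H\qquad\text{for all }x,
\end{equation*}
since then $\widetilde{D}(a_{x})\widetilde{D}^{\dagger}(a_{x})=S^{-1}D(a_{x})HD^{\dagger}(a_{x})S^{-1}=S^{-1}HS^{-1}=1$. This invariance of $H$ is the heart of the proof and the step I expect to be the main obstacle, because it is where the hypothesis on the structure constants must do the work that the bijection $y\mapsto xy$ does for a finite group. I would prove it by expanding $D(a_{x})HD^{\dagger}(a_{x})=\sum_{y}D(a_{x})D(a_{y})D^{\dagger}(a_{y})D^{\dagger}(a_{x})$, using $D(a_{x})D(a_{y})=D(a_{x}a_{y})$ together with $D^{\dagger}(a_{y})D^{\dagger}(a_{x})=\big(D(a_{x}a_{y})\big)^{\dagger}$ to rewrite it as $\sum_{y}D(a_{x}a_{y})D^{\dagger}(a_{x}a_{y})$, and then inserting the characteristic equation $a_{x}a_{y}=\sum_{z}c_{xy}^{z}a_{z}$. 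This turns the expression into $\sum_{z,z'}\big(\sum_{y}c_{xy}^{z}c_{xy}^{z'\ast}\big)D(a_{z})D^{\dagger}(a_{z'})$, and the hypothesis $\sum_{y}c_{xy}^{z}c_{xy}^{z'\ast}=\delta_{zz'}$ collapses this to $\sum_{z}D(a_{z})D^{\dagger}(a_{z})=H$. The delicate bookkeeping here is to make sure that the index over which $z$ runs in the characteristic equation coincides with the full set of basis labels, so that the collapsed sum is literally $H$ again and not a truncation of it; one should also be explicit that $x$ is held fixed throughout while $y$ and $z$ range over the whole basis.

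Granting the identity, the proof closes at once: $\widetilde{D}=S^{-1}D(\cdot)S$ is a representation equivalent to $D$ in which every basis element is carried to a unitary matrix, which is what ``unitary representation'' means in the sense of the preceding theorems. The only auxiliary facts invoked are the existence and uniqueness of the positive-definite Hermitian square root of a positive-definite Hermitian matrix (standard spectral-theorem material) and the remark that invertibility of every element of $A$ forces invertibility of every matrix $D(a_{y})$, which is precisely what upgrades $H$ from positive semidefinite to positive definite and hence guarantees that $S$ is invertible. Apart from the combinatorial care required in the invariance identity, the argument is routine.
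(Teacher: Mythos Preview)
Your proof is correct and follows essentially the same route as the paper: both average $D(a_y)D^{\dagger}(a_y)$ over the basis to build a positive Hermitian operator, conjugate by its inverse square root, and verify unitarity via the key invariance identity that collapses using $\sum_{y}c_{xy}^{z}c_{xy}^{z'\ast}=\delta_{zz'}$. The only cosmetic difference is that the paper first diagonalizes $F$ by a unitary $U$ and then takes $\Lambda^{-1/2}$ of the resulting diagonal matrix, whereas you invoke the positive square root $H^{1/2}$ directly; the two similarity matrices differ by a unitary factor and yield the same conclusion.
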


\begin{proof}
Let $\left\{ a_{1},a_{2},...,a_{d}\right\} $ be basis elements of the algebra and consider the following hermitian matrix
\begin{equation}
F=\underset{y}{\sum }D(a_{y})D^{\dagger }(a_{y}),
\end{equation}%
where the sum is performed over all basis elements. All Hermitian matrix can be diagonalized by a unitary matrix $U$. Then
\begin{eqnarray}
\Lambda  &=&U\underset{y}{\sum }D(a_{y})D^{\dagger }(a_{y})U^{-1} \notag \\
&=&\underset{y}{\sum }UD(a_{y})U^{-1}UD^{\dagger }(a_{y})U^{-1}  \notag \\
&=&\underset{y}{\sum }\left[ UD(a_{y})U^{-1}\right] \left[ UD^{\dagger
}(a_{y})U^{-1}\right]   \notag \\
&=&\underset{y}{\sum }\left[ UD(a_{y})U^{-1}\right] \left[
(U^{-1})D^{\dagger }(a_{y})U^{\dagger }\right]   \notag \\
&=&\underset{y}{\sum }\left[ UD(a_{y})U^{-1}\right] \left[ UD(a_{y})U^{-1}%
\right] ^{\dagger }.
\end{eqnarray}%
Define $T(a_{y})=UD(a_{y})U^{-1}$, we obtain
\begin{equation}
\Lambda =\underset{y}{\sum }T(a_{y})T^{\dagger }(a_{y}).
\end{equation}%
The set $\left\{T(a_{y})\right\}$ form an equivalent representation of the matrix $\left\{ D(a_{y})\right\} $. Since
\begin{equation*}
\Lambda ^{-1/2}\Lambda \Lambda ^{-1/2}=I
\end{equation*}%
and
\begin{equation*}
\left( \Lambda ^{1/2}\right) ^{\dagger }=\Lambda ^{1/2},
\end{equation*}%
we have
\begin{equation}\label{quatrinta}
\Lambda ^{-1/2}\underset{y}{\sum }T(a_{y})T^{\dagger }(a_{y})\Lambda
^{-1/2}=I,
\end{equation}%
where $\Lambda^{\pm 1/2}$  is obtained taking the square root $(+)$ or the inverse of the square root $(-)$ of the elements of $\Lambda$. Let us define the matrix
\begin{equation}
R(a_{x})=\Lambda ^{-1/2}T(a_{x})\Lambda ^{1/2}
\end{equation}
\newline
This matrix is unitary and
\begin{eqnarray*}
R(a_{x})R^{\dagger }(a_{x})=\Lambda ^{-1/2}T(a_{x})\Lambda ^{1/2}\Lambda
^{1/2}T^{\dagger }(a_{x})\Lambda ^{-1/2}
\end{eqnarray*}
\begin{eqnarray*}
&=&\Lambda ^{-1/2}T(a_{x})\Lambda ^{1/2}(\Lambda ^{-1/2}\underset{y}{\sum }T(a_{y})T^{\dagger }(a_{y}) \nonumber \\
&&\Lambda ^{-1/2})\Lambda ^{1/2}T^{\dagger}(a_{x})\Lambda ^{-1/2}  
\end{eqnarray*}
\begin{eqnarray*}
&=&\Lambda ^{-1/2}\underset{y}{\sum }\left[ T(a_{x})T(a_{y})\right] \left[
T(a_{x})T(a_{y})\right] ^{\dagger }\Lambda ^{-1/2}  \nonumber \\
&=&\Lambda ^{-1/2}\underset{y}{\sum }T(\underset{z}{\sum }%
c_{xy}^{z}a_{z})T^{\dagger }(\underset{z^{\prime }}{\sum }c_{xy}^{z^{\prime
}}a_{z^{\prime }})\Lambda ^{-1/2}  \nonumber \\
&=&\Lambda ^{-1/2}\underset{zz^{\prime }}{\sum }\underset{y}{\sum }%
c_{xy}^{z}c_{xy}^{z^{\prime }\ast }T(a_{z})T^{\dagger }(a_{z^{\prime
}})\Lambda ^{-1/2}  \nonumber \\
&=&\Lambda ^{-1/2}\underset{zz^{\prime }}{\sum }\delta _{zz^{\prime
}}T(a_{z})T^{\dagger }(a_{z^{\prime }})\Lambda ^{-1/2}  \nonumber \\
&=&\Lambda ^{-1/2}\underset{z}{\sum }T(a_{z})T^{\dagger }(a_{z})\Lambda
^{-1/2}=I.
\end{eqnarray*}%
From equation (\ref{quatrinta}), we have
\begin{equation*}
R(a_{x})R^{\dagger }(a_{x})=\Lambda ^{-1/2}\underset{z}{\sum }%
T(a_{z})T^{\dagger }(a_{z})\Lambda ^{-1/2}=I.
\end{equation*}%
Using the unitary matrix $U$ that diagonalizes $F$, the matrices $\left\{ D(a_{y})\right\}$ are
transformed into unitary matrices through of the matrices $\Lambda ^{-1/2}U$, i.e.
\begin{equation*}
R(a)=\left( \Lambda ^{-1/2}U\right) D(a)\left( \Lambda ^{-1/2}U\right) ^{-1}.
\end{equation*}%
Therefore, the matrices $\left\{ D(a)\right\} $ are always convertible into unitary matrices.
\end{proof}


\begin{theorem}\label{theoremIII.9}
Let $A$ be an algebra of dimension $d$ that satisfies the characteristic equation defined by the previous theorem, in other words
$\underset{y=1}{\overset{n}{\sum }}$ $c_{xy}^{z}c_{xy}^{z^{\prime }\ast }=\delta_{zz^{\prime }}$. The characteres $\chi ^{\alpha }(a_{y})$
and $\chi ^{\beta}(a_{y})$ associated with two irreducible representations $\Gamma ^{\alpha }$ and $\Gamma ^{\beta }$, respectively, satisfy the orthogonality relation
\begin{equation} \label{cara}
\underset{y}{\sum }\chi ^{\ast \alpha }(a_{y})\chi ^{\beta }(a_{y})=d\delta
_{\alpha \beta }.
\end{equation}
\end{theorem}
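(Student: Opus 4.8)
The plan is to obtain the character orthogonality relation \eqref{cara} as an immediate corollary of the matrix-element orthogonality theorem (Theorem \ref{theoII7}), exactly as in the finite-group case. The starting point is that, under the hypothesis $\sum_{y} c_{xy}^{z}c_{xy}^{z'\ast}=\delta_{zz'}$, every irreducible representation may be taken unitary (Theorem \ref{theo24}), so Theorem \ref{theoII7} applies to $\Gamma^{\alpha}$ and $\Gamma^{\beta}$ in the form
\begin{equation*}
\underset{y}{\sum }D_{ij}^{\ast \alpha }(a_{y})D_{kl}^{\beta }(a_{y})=\frac{d}{d_{\alpha }}\delta _{\alpha \beta }\delta _{ik}\delta _{jl}.
\end{equation*}

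First I would write $\chi^{\alpha}(a_{y})=\sum_{i}D_{ii}^{\alpha}(a_{y})$ and $\chi^{\beta}(a_{y})=\sum_{k}D_{kk}^{\beta}(a_{y})$, so that
\begin{equation*}
\underset{y}{\sum }\chi ^{\ast \alpha }(a_{y})\chi ^{\beta }(a_{y})=\underset{i,k}{\sum }\ \underset{y}{\sum }D_{ii}^{\ast \alpha }(a_{y})D_{kk}^{\beta }(a_{y}).
\end{equation*}
Then I would substitute the orthogonality relation with $j=i$ and $l=k$, giving $\sum_{y}D_{ii}^{\ast\alpha}(a_{y})D_{kk}^{\beta}(a_{y})=\frac{d}{d_{\alpha}}\delta_{\alpha\beta}\delta_{ik}$. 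Carrying out the double sum over $i$ and $k$, the factor $\delta_{ik}$ collapses it to a single sum over $i$ running from $1$ to $d_{\alpha}$, which contributes a factor $d_{\alpha}$; hence
\begin{equation*}
\underset{y}{\sum }\chi ^{\ast \alpha }(a_{y})\chi ^{\beta }(a_{y})=\underset{i}{\sum }\frac{d}{d_{\alpha }}\delta _{\alpha \beta }=d_{\alpha }\cdot \frac{d}{d_{\alpha }}\delta _{\alpha \beta }=d\,\delta _{\alpha \beta },
\end{equation*}
which is precisely \eqref{cara}.

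There is essentially no serious obstacle here; the only point that requires a word of care is the $\alpha=\beta$ case, where one must note that the relevant dimension appearing in the prefactor and in the range of the index $i$ is the same $d_{\alpha}$, so the cancellation is exact and independent of which irreducible representation is chosen. The case $\alpha\neq\beta$ is even easier, since every term on the right vanishes. I would also remark that, just as for finite groups, this relation shows the characters of inequivalent irreducible representations are orthogonal as vectors indexed by the basis elements $a_{y}$, so their number cannot exceed $d$; combined with the earlier counting theorems this is consistent with the structure theory developed above.
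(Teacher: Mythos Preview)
Your proof is correct and follows essentially the same route as the paper: start from the matrix-element orthogonality of Theorem~\ref{theoII7}, specialize to diagonal entries $j=i$ and $l=k$, sum over $i$ and $k$, and let the factor $d_{\alpha}$ cancel against the range of the surviving index sum. The paper's argument is the same trace computation, only slightly more compressed.
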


\begin{proof}
From the previous theorem,
\begin{equation*}
\underset{y}{\sum }D_{ij}^{\ast \alpha }(a_{y})D_{kl}^{\beta }(a_{y})=\frac{d%
}{d_{\alpha }}\delta _{\alpha \beta }\delta _{ik}\delta _{jl}.
\end{equation*}%
and setting $i=j,k=l$, we have
\begin{eqnarray*}
\underset{i}{\sum }\underset{k}{\sum }\left( \underset{y}{\sum }D_{ii}^{\ast
\alpha }(a_{y})D_{kk}^{\ast \beta }(a_{y})\right)=
\end{eqnarray*}
\begin{eqnarray*}
&&\underset{y}{\sum }%
\left( \underset{i}{\sum }D_{ii}^{\ast \alpha }(a_{y})\right) \left(
\underset{k}{\sum }D_{kk}^{\ast \beta }(a_{y})\right) \notag \\
&=&\underset{y}{\sum }D_{ij}^{\ast \alpha }(a_{y})D_{kl}^{\ast \beta
}(a_{y})=\frac{d}{d\alpha }\delta _{\alpha \beta }\underset{i}{\sum }%
\underset{k}{\sum }(\delta _{ik})^{2}  \notag \\
&=&d\delta _{\alpha \beta}
\end{eqnarray*}%
which proves our theorem.
\end{proof}

\subsection{The Projection Operators}

In this subsection, we will construct projection operators \cite{hamermesh} in our formulation. The result of the operation of any base element algebra on the function $\varphi_{k}^{(j)}$ is expressible as
\begin{equation}
P_{a_{y}}\varphi _{k}^{(j)}=\underset{\lambda =1}{\overset{d_{j}}{\sum }}%
\varphi _{\lambda }^{(j)}D_{\lambda k}^{(j)}(a_{y}),
\end{equation}%
where $d_{j}$ is the dimension of representation $\Gamma ^{j}$. If we multiply this expression by $D_{\lambda ^{^{\prime }}k^{^{\prime
}}}^{(i)}(a_{y})$ and we add on the basis elements of algebra, we obtain
\begin{eqnarray}\label{opproj2}
\underset{y}{\sum }D_{\lambda ^{^{\prime }}k^{^{\prime
}}}^{(i)}(a_{y})P_{a_{y}}\varphi _{k}^{(j)} &=&\underset{\lambda =1}{\overset%
{d_{j}}{\sum }}\underset{y}{\sum }D_{\lambda ^{^{\prime }}k^{^{\prime
}}}^{(i)}(a_{y})\nonumber \\
&&D_{\lambda k}^{(j)}(a_{y})\varphi _{\lambda }^{(j)} \notag \\
&=&\frac{d}{d_{j}}\delta
_{ij}\delta _{kk^{^{\prime }}}\delta _{\lambda \lambda ^{^{\prime }}}\varphi
_{\lambda }^{(j)}  \notag \\
&=&\frac{d}{d_{j}}\delta
_{ij}\delta _{kk^{^{\prime }}}\varphi _{\lambda ^{^{\prime }}}^{(j)},
\end{eqnarray}%
from last Theorem \ref{theoremIII.9}. We define
\begin{equation}
P_{\lambda k}^{(j)}=\frac{d_{j}}{d}\underset{y}{\sum }D_{\lambda k}^{\ast
(j)}(a_{y})P_{a_{y}},
\end{equation}%
It is easy to see that
\begin{eqnarray}
P_{\lambda k}^{(j)}\varphi _{l}^{(i)} =\varphi _{\lambda }^{(j)}\delta
_{ij}\delta_{kl}.
\end{eqnarray}
If we apply the operator $P_{kk}^{(j)}$ to function $\phi=\underset{j^{^{\prime }}=1}{\overset{m}{\sum }}\overset{d_{j}}{%
\underset{k^{^{\prime }}=1}{\sum }}\varphi _{k^{^{\prime }}}^{(j^{^{\prime
}})}$, we obtain
\begin{eqnarray}
P_{kk}^{(j)}\phi=\varphi _{k}^{(j)}.
\end{eqnarray}
Thus the operator $P_{kk}^{(j)}$ projects only part of the function that belongs to the $k$th row of the irreducible
representation $\Gamma ^{j}$ of the space operators $P_{a_{y}}$; the operator $P_{kk}^{(j)}$ is a projection operator.
In particular, for $\lambda =k$, and summing over $k$, we have
\begin{eqnarray}
P^{(j)} &=&\underset{k}{\sum }P_{kk}^{(j)} \nonumber \\
&=&\frac{d_{j}}{d}\underset{y}{\sum }\left( \underset{k}{\sum }%
D_{kk}^{\ast (j)}(a_{y})\right) P_{a_{y}}  \nonumber \\
&=&\frac{d_{j}}{d}\underset{y}{\sum }\chi ^{\ast (j)}(a_{y})P_{a_{y}}.
\end{eqnarray}%
Applying this operator on a $\phi$ function
\begin{equation}
P^{(j)}\phi =\underset{k}{\sum }P_{kk}^{(j)}\phi =\underset{k}{\sum }\phi
_{k}^{(j)}=\phi ^{(j)},
\end{equation}%
where $\phi ^{(j)}$ is any function which can be expressed as a sum of functions belonging to the lines into the
representation $j$ and it satisfies
\begin{eqnarray}
P^{(j)}\phi ^{(j)}=\phi ^{(j)}.
\end{eqnarray}
Therefore, we can verify that
\begin{equation*}
\underset{j}{\sum }P^{(j)}=P_{e},
\end{equation*}
where $e$ is identity. Importantly, this deduction is completely analogous to the group theory approach due to our orthogonality theorems for algebras previously deducted.

\section{Algebraic Construction of Decoherence-Free Subspaces}  \label{sec:results3}

Let $H_{s}$ be Hilbert space associated with the system $S$. A DFS is a subspace $H_{s}^{\prime}=Span\{\left\vert j^{\prime}\right\rangle \}$ of the full system Hilbert space $H_{s}$ over which the evolution of the density
matrix is unitary. Lidar \textit{et al.} \cite{lidar1} showed that the decoherence-free
states ${\vert j^{\prime}\rangle }$ belongs to the one-dimensional irreducible representations of the Pauli group when the Kraus operators are characterized as operators in the algebra of the Pauli group. Since the matrices Pauli provide representations of the Clifford algebras, we can build DFSs in terms of these algebras using the theorems of the previous section.

For we use our formalism, let us consider the Clifford algebra $Cl_{3}$ over the complex field $(\mathbb{C})$
defined \cite{boerner} by
\begin{equation} \label{clifford1}
\left\{
\begin{array}{c}
\gamma _{i}^{2}=1,\ \ \ \ \ \ \ \ \ i=1,2,3 \\
\gamma _{i}\gamma _{j}+\gamma _{j}\gamma _{i}=0,\ \ \ i\neq j%
\end{array}%
\right.
\end{equation}

We can write a qubit $\left\vert \psi\right\rangle=a\left\vert 0\right\rangle+b\left\vert 1\right\rangle$ (with $\mid a\mid^{2}+\mid b\mid^{2}=1$) as the following element $\Psi$ of the minimal left ideal \cite{Trindade}, i.e.:
\begin{equation*}
\Psi=a\gamma_{1}\varepsilon_{1}+b\gamma_{3}\varepsilon_{1}
\end{equation*}
where a,b $\in \mathbb{C}$ and $\varepsilon_{1}$ is a primitive idempotent and it is given by:
\begin{equation}\label{clifford2}
\varepsilon_{1}=\frac{1}{2}(1+\gamma_{3}).
\end{equation}

An representation of this algebra is given by Pauli matrices
\begin{eqnarray}
\gamma _{1} &\Longleftrightarrow &\sigma _{1}=\left(
\begin{array}{cc}
0 & 1 \\
1 & 0%
\end{array}%
\right),
\end{eqnarray}
\begin{eqnarray}
\gamma _{2}\Longleftrightarrow \ \sigma _{2}=\left(
\begin{array}{cc}
0 & -i \\
i & 0%
\end{array}%
\right),
\end{eqnarray}
\begin{eqnarray}
\gamma _{3} &\Longleftrightarrow &\ \sigma _{3}=\left(
\begin{array}{cc}
1 & 0 \\
0 & -1%
\end{array}%
\right) .
\end{eqnarray}

and an arbitrary element of the algebra can be written as
\begin{eqnarray}
\Gamma &=&\alpha _{0}1+\alpha _{1}\gamma _{1}+\alpha _{2}\gamma _{2}+\alpha
_{3}\gamma _{3}+\alpha _{12}\gamma _{1}\gamma _{2}+ \nonumber \\
&&\alpha _{23}\gamma_{2}\gamma _{3}+\alpha _{13}\gamma _{3}\gamma _{1}+\alpha _{123}\gamma
_{1}\gamma _{2}\gamma _{3}.
\end{eqnarray}

It is easy to see that we have two elements that commute with all other
\begin{equation}
\left\{ 1,\gamma _{1}\gamma _{2}\gamma _{3}\right\} .
\end{equation}

Motivated by Theorem \ref{theo18}, we have two nonequivalent irreducible representations. A second nonequivalent irreducible
representation is given by
\begin{eqnarray}
\gamma _{1} &\Longleftrightarrow &-\sigma _{1}=\left(
\begin{array}{cc}
0 & -1 \\
-1 & 0%
\end{array}%
\right),
\end{eqnarray}
\begin{eqnarray}
\gamma _{2}\Longleftrightarrow -\ \sigma _{2}=\left(
\begin{array}{cc}
0 & i \\
-i & 0%
\end{array}%
\right),
\end{eqnarray}
\begin{eqnarray}
\gamma _{3} &\Longleftrightarrow &-\ \sigma _{3}=\left(
\begin{array}{cc}
-1 & 0 \\
0 & 1%
\end{array}%
\right) .
\end{eqnarray}

Note that this is not equivalent to the first, because the first element $i\gamma _{1}\gamma _{2}\gamma _{3}$ is mapped into
$-I$ and the second in $I$. Thus,
\begin{eqnarray}
i\gamma _{1}\gamma _{2}\gamma _{3} &\rightarrow &i\sigma _{1}\sigma
_{2}\sigma _{3}=iiI=-I,  \notag \\
i\gamma _{1}\gamma _{2}\gamma _{3} &\rightarrow &i(-\sigma _{1})(-\sigma
_{2})(-\sigma _{3})=I.
\end{eqnarray}

This information, together with the previous theorems allows us to construct nonequivalent irreducible representations for $Cl_{3}\otimes Cl_{3}$. The number $M$ of nonequivalent irreducible representations is given by $M=n_{1}n_{2}=2\times 2=4$, according to the Theorem \ref{theo20}. The four basis elements belonging to $Cl_{3}\otimes Cl_{3}$ which commute with every other elements corresponding
\begin{equation}
\left\{
\begin{array}{c}
\gamma _{1}\gamma _{2}\gamma _{3}\otimes \gamma _{1}\gamma _{2}\gamma _{3},
\\
\gamma _{1}\gamma _{2}\gamma _{3}\otimes 1, \\
1\otimes \gamma _{1}\gamma _{2}\gamma _{3}, \\
1\otimes 1.%
\end{array}%
\right.
\end{equation}

The four nonequivalent irreducible representations may be obtained through the following combinations
\begin{eqnarray*}
D^{(1)} &:&\left\{
\begin{array}{c}
\gamma _{1}^{1}\rightarrow \sigma _{1}\ \ \ \ \gamma _{2}^{1}\rightarrow
\sigma _{2}\ \ \ \ \gamma _{3}^{1}\rightarrow \sigma _{3} \\
\gamma _{1}^{2}\rightarrow \sigma _{1}\ \ \ \ \gamma _{2}^{2}\rightarrow
\sigma _{2}\ \ \ \ \gamma _{3}^{2}\rightarrow \sigma _{3}%
\end{array}%
\right.   \notag \\
D^{(2)} &:&\left\{
\begin{array}{c}
\gamma _{1}^{1}\rightarrow \sigma _{1}\ \ \ \ \gamma _{2}^{1}\rightarrow
\sigma _{2}\ \ \ \ \gamma _{3}^{1}\rightarrow \sigma _{3} \\
\gamma _{1}^{2}\rightarrow -\sigma _{1}\ \ \ \ \gamma _{2}^{2}\rightarrow
-\sigma _{2}\ \ \ \ \gamma _{3}^{2}\rightarrow -\sigma _{3}%
\end{array}%
\right.   \notag \\
D^{(3)} &:&\left\{
\begin{array}{c}
\gamma _{1}^{1}\rightarrow -\sigma _{1}\ \ \ \ \gamma _{2}^{1}\rightarrow
-\sigma _{2}\ \ \ \ \gamma _{3}^{1}\rightarrow -\sigma _{3} \\
\gamma _{1}^{2}\rightarrow \sigma _{1}\ \ \ \ \gamma _{2}^{2}\rightarrow
\sigma _{2}\ \ \ \ \gamma _{3}^{2}\rightarrow \sigma _{3}%
\end{array}%
\right.   \notag \\
D^{(4)} &:&\left\{
\begin{array}{c}
\gamma _{1}^{1}\rightarrow -\sigma _{1}\ \ \ \ \gamma _{2}^{1}\rightarrow
-\sigma _{2}\ \ \ \ \gamma _{3}^{1}\rightarrow -\sigma _{3} \\
\gamma _{1}^{2}\rightarrow -\sigma _{1}\ \ \ \ \gamma _{2}^{2}\rightarrow
-\sigma _{2}\ \ \ \ \gamma _{3}^{2}\rightarrow -\sigma _{3}%
\end{array}%
\right.
\end{eqnarray*}%
We can verify that these representations are nonequivalent if we observe the elements $i\gamma _{1}\gamma _{2}\gamma
_{3}\otimes 1$ and $1\otimes i\gamma _{1}\gamma _{2}\gamma _{3}$:
\begin{eqnarray*}
D^{(1)} &:&\left\{
\begin{array}{c}
i\gamma _{1}\gamma _{2}\gamma _{3}\otimes 1\rightarrow i\sigma _{1}\sigma
_{2}\sigma _{3}\otimes 1=-1\otimes 1 \\
1\otimes i\gamma _{1}\gamma _{2}\gamma _{3}\rightarrow I\otimes i\sigma
_{1}\sigma _{2}\sigma _{3}=-1\otimes 1%
\end{array}%
\right.   \notag \\
D^{(2)} &:&\left\{
\begin{array}{c}
i\gamma _{1}\gamma _{2}\gamma _{3}\otimes 1\rightarrow i\sigma _{1}\sigma
_{2}\sigma _{3}\otimes I=-I\otimes I \\
1\otimes i\gamma _{1}\gamma _{2}\gamma _{3}\rightarrow I\otimes i(-\sigma
_{1})(-\sigma _{2})(-\sigma _{3})=I\otimes I%
\end{array}%
\right.   \notag \\
D^{(3)} &:&\left\{
\begin{array}{c}
i\gamma _{1}\gamma _{2}\gamma _{3}\otimes 1\rightarrow i(-\sigma
_{1})(-\sigma _{2})(-\sigma _{3})\otimes I=I\otimes I \\
1\otimes i\gamma _{1}\gamma _{2}\gamma _{3}\rightarrow I\otimes i\sigma
_{1}\sigma _{2}\sigma _{3}=-I\otimes I%
\end{array}%
\right.   \notag \\
D^{(4)} &:&\left\{
\begin{array}{c}
i\gamma _{1}\gamma _{2}\gamma _{3}\otimes 1\rightarrow i(-\sigma
_{1})(-\sigma _{2})(-\sigma _{3})\otimes I=I\otimes I \\
1\otimes i\gamma _{1}\gamma _{2}\gamma _{3}\rightarrow I\otimes i(-\sigma
_{1})(-\sigma _{2})(-\sigma _{3})=I\otimes I%
\end{array}%
\right.
\end{eqnarray*}%
We have seen that given a representation where $i\gamma _{1}\gamma _{2}\gamma _{3}\otimes 1$ or $1\otimes i\gamma _{1}\gamma _{2}\gamma _{3}$ is mapped into $I\otimes I$, the other representation, it is mapped in $-I\otimes I$. This method can be generalized into a tensor product of algebras with any number of factors. For example to $Cl_{3} \otimes Cl_{3} \otimes Cl_{3}$, we have eight nonequivalent irreducible representations $\left( M=n_{1}n_{2}n_{3}=8\right)$; that is, we have eight basis elements that commute with all other elements:
\begin{equation}
\left\{
\begin{array}{c}
\gamma _{1}\gamma _{2}\gamma _{3}\otimes \gamma _{1}\gamma _{2}\gamma
_{3}\otimes \gamma _{1}\gamma _{2}\gamma _{3} \\
\gamma _{1}\gamma _{2}\gamma _{3}\otimes \gamma _{1}\gamma _{2}\gamma
_{3}\otimes 1 \\
\gamma _{1}\gamma _{2}\gamma _{3}\otimes 1\otimes \gamma _{1}\gamma
_{2}\gamma _{3} \\
\gamma _{1}\gamma _{2}\gamma _{3}\otimes 1\otimes 1 \\
1\otimes 1\otimes \gamma _{1}\gamma _{2}\gamma _{3} \\
1\otimes \gamma _{1}\gamma _{2}\gamma _{3}\otimes 1 \\
1\otimes \gamma _{1}\gamma _{2}\gamma _{3}\otimes \gamma _{1}\gamma
_{2}\gamma _{3} \\
1\otimes 1\otimes 1%
\end{array}%
\right.
\end{equation}%
The representations are given by
\begin{equation}
D^{(1)}:\left\{
\begin{array}{c}
\gamma _{1}^{1}\rightarrow \sigma _{1}\ \ \ \ \gamma _{2}^{1}\rightarrow
\sigma _{2}\ \ \ \ \gamma _{3}^{1}\rightarrow \sigma _{3} \\
\gamma _{1}^{2}\rightarrow \sigma _{1}\ \ \ \ \gamma _{2}^{2}\rightarrow
\sigma _{2}\ \ \ \ \gamma _{3}^{2}\rightarrow \sigma _{3} \\
\gamma _{1}^{3}\rightarrow \sigma _{1}\ \ \ \ \gamma _{2}^{3}\rightarrow
\sigma _{2}\ \ \ \ \gamma _{3}^{3}\rightarrow \sigma _{3}%
\end{array}%
\right.
\end{equation}%
\begin{equation}
D^{(2)}:\left\{
\begin{array}{c}
\gamma _{1}^{1}\rightarrow \sigma _{1}\ \ \ \ \gamma _{2}^{1}\rightarrow
\sigma _{2}\ \ \ \ \gamma _{3}^{1}\rightarrow \sigma _{3} \\
\gamma _{1}^{2}\rightarrow \sigma _{1}\ \ \ \ \gamma _{2}^{2}\rightarrow
\sigma _{2}\ \ \ \ \gamma _{3}^{2}\rightarrow \sigma _{3} \\
\gamma _{1}^{3}\rightarrow -\sigma _{1}\ \ \ \ \gamma _{2}^{3}\rightarrow
-\sigma _{2}\ \ \ \ \gamma _{3}^{3}\rightarrow -\sigma _{3}%
\end{array}%
\right.
\end{equation}%
\begin{equation}
D^{(3)}:\left\{
\begin{array}{c}
\gamma _{1}^{1}\rightarrow \sigma _{1}\ \ \ \ \gamma _{2}^{1}\rightarrow
\sigma _{2}\ \ \ \ \gamma _{3}^{1}\rightarrow \sigma _{3} \\
\gamma _{1}^{2}\rightarrow -\sigma _{1}\ \ \ \ \gamma _{2}^{2}\rightarrow
-\sigma _{2}\ \ \ \ \gamma _{3}^{2}\rightarrow -\sigma _{3} \\
\gamma _{1}^{3}\rightarrow \sigma _{1}\ \ \ \ \gamma _{2}^{3}\rightarrow
\sigma _{2}\ \ \ \ \gamma _{3}^{3}\rightarrow \sigma _{3}%
\end{array}%
\right.
\end{equation}%
\begin{equation}
D^{(4)}:\left\{
\begin{array}{c}
\gamma _{1}^{1}\rightarrow -\sigma _{1}\ \ \ \ \gamma _{2}^{1}\rightarrow
-\sigma _{2}\ \ \ \ \gamma _{3}^{1}\rightarrow -\sigma _{3} \\
\gamma _{1}^{2}\rightarrow \sigma _{1}\ \ \ \ \gamma _{2}^{2}\rightarrow
\sigma _{2}\ \ \ \ \gamma _{3}^{2}\rightarrow \sigma _{3} \\
\gamma _{1}^{3}\rightarrow \sigma _{1}\ \ \ \ \gamma _{2}^{3}\rightarrow
\sigma _{2}\ \ \ \ \gamma _{3}^{3}\rightarrow \sigma _{3}%
\end{array}%
\right.
\end{equation}%
\begin{equation}
D^{(5)}:\left\{
\begin{array}{c}
\gamma _{1}^{1}\rightarrow \sigma _{1}\ \ \ \ \gamma _{2}^{1}\rightarrow
\sigma _{2}\ \ \ \ \gamma _{3}^{1}\rightarrow \sigma _{3} \\
\gamma _{1}^{2}\rightarrow -\sigma _{1}\ \ \ \ \gamma _{2}^{2}\rightarrow
-\sigma _{2}\ \ \ \ \gamma _{3}^{2}\rightarrow -\sigma _{3} \\
\gamma _{1}^{3}\rightarrow -\sigma _{1}\ \ \ \ \gamma _{2}^{3}\rightarrow
-\sigma _{2}\ \ \ \ \gamma _{3}^{3}\rightarrow -\sigma _{3}%
\end{array}%
\right.
\end{equation}%
\begin{equation}
D^{(6)}:\left\{
\begin{array}{c}
\gamma _{1}^{1}\rightarrow -\sigma _{1}\ \ \ \ \gamma _{2}^{1}\rightarrow
-\sigma _{2}\ \ \ \ \gamma _{3}^{1}\rightarrow -\sigma _{3} \\
\gamma _{1}^{2}\rightarrow -\sigma _{1}\ \ \ \ \gamma _{2}^{2}\rightarrow
-\sigma _{2}\ \ \ \ \gamma _{3}^{2}\rightarrow -\sigma _{3} \\
\gamma _{1}^{3}\rightarrow \sigma _{1}\ \ \ \ \gamma _{2}^{3}\rightarrow
\sigma _{2}\ \ \ \ \gamma _{3}^{3}\rightarrow \sigma _{3}%
\end{array}%
\right.
\end{equation}%
\begin{equation}
D^{(7)}:\left\{
\begin{array}{c}
\gamma _{1}^{1}\rightarrow -\sigma _{1}\ \ \ \ \gamma _{2}^{1}\rightarrow
-\sigma _{2}\ \ \ \ \gamma _{3}^{1}\rightarrow -\sigma _{3} \\
\gamma _{1}^{2}\rightarrow \sigma _{1}\ \ \ \ \gamma _{2}^{2}\rightarrow
\sigma _{2}\ \ \ \ \gamma _{3}^{2}\rightarrow \sigma _{3} \\
\gamma _{1}^{3}\rightarrow -\sigma _{1}\ \ \ \ \gamma _{2}^{3}\rightarrow
-\sigma _{2}\ \ \ \ \gamma _{3}^{3}\rightarrow -\sigma _{3}%
\end{array}%
\right.
\end{equation}%
\begin{equation}
D^{(8)}:\left\{
\begin{array}{c}
\gamma _{1}^{1}\rightarrow -\sigma _{1}\ \ \ \ \gamma _{2}^{1}\rightarrow
-\sigma _{2}\ \ \ \ \gamma _{3}^{1}\rightarrow -\sigma _{3} \\
\gamma _{1}^{2}\rightarrow -\sigma _{1}\ \ \ \ \gamma _{2}^{2}\rightarrow
-\sigma _{2}\ \ \ \ \gamma _{3}^{2}\rightarrow -\sigma _{3} \\
\gamma _{1}^{3}\rightarrow -\sigma _{1}\ \ \ \ \gamma _{2}^{3}\rightarrow
-\sigma _{2}\ \ \ \ \gamma _{3}^{3}\rightarrow -\sigma _{3}%
\end{array}%
\right.
\end{equation}%
We can verify that these representations are nonequivalent analyzing the realizations of basis elements
\begin{equation}
\left\{
\begin{array}{c}
i\gamma _{1}\gamma _{2}\gamma _{3}\otimes 1\otimes 1 \\
1\otimes i\gamma _{1}\gamma _{2}\gamma _{3}\otimes 1 \\
1\otimes 1\otimes i\gamma _{1}\gamma _{2}\gamma _{3}%
\end{array}%
\right.
\end{equation}

We will show that obtaining one-dimensional nonequivalent irreducible representations for subalgebras of $Cl_{3}
\otimes Cl_{3}\otimes \cdots\otimes Cl_{3}$ may allow the construction of decoherence-free subspaces, that is, insensitive subspaces to the action of a certain noise.

Consider as an example the case of three qubits in our formulation. Let $\Gamma_{1}$  be an error operator \cite{lidar1}  given by
\begin{eqnarray*}
\Gamma _{1}&=&k_{1,1}1\otimes 1\otimes 1+k_{1,2}\gamma _{3}\otimes \gamma
_{3}\otimes 1+ \\
&&k_{1,3}1\otimes \gamma _{3}\otimes \gamma _{3}+ k_{1,4}\gamma
_{3}\otimes 1\otimes \gamma _{3},
\end{eqnarray*}%
where $k_{i,j} \in \mathbb{C}$. This error operator is the most general element of a subalgebra $A_{1}$ of $Cl_{3} \otimes Cl_{3}\otimes C_{3}$
generated by the elements $\gamma_{3}\otimes \gamma_{3}\otimes 1$ and $1\otimes \gamma _{3}\otimes \gamma _{3}$.
This subalgebra has four basis elements that commute and therefore, by Theorem \ref{theo17}, we have 4 nonequivalent irreducible representations.
The dimensions of the representations, by Theorem \ref{theo18}, is given by the equation
\begin{equation}
4=n_{1}^{2}+n_{2}^{2}+n_{3}^{2}+n_{4}^{2},
\end{equation}%
The only solution is $n=1$. So we have four one-dimensional irreducible representations. We can find invariant subspaces from the projectors algebra, previously defined by
\begin{equation}
P^{(j)}=\frac{d_{j}}{d}\underset{y}{\sum }\chi ^{(j)\ast }(a_{y})P_{a_{y}},
\end{equation}%
where $d$ is the dimension of the algebra, $d_{j}$ is the dimension of the $j$th representation, $\chi$ is the character of the elements
of algebra, the set $a_{y}$ are basis elements of the algebra and $P_{a_{y}}$ are operators corresponding to the basis elements $a_{y}$.
Now, to get them explicitly, it is also necessary to know the character table of algebra. This, in turn, can be determined by using the following orthogonality relation
\begin{equation}
\underset{y}{\sum }\chi ^{(\alpha )\ast }(a_{y})\chi ^{(\beta
)}(a_{y})=d\delta _{\alpha ,\beta },
\end{equation}
previously deduced (see Theorem \ref{theo24}). For the above example
\begin{equation}
\left\vert \chi (a_{1})\right\vert ^{2}+\left\vert \chi (a_{2})\right\vert
^{2}+\left\vert \chi (a_{3})\right\vert ^{2}+\left\vert \chi
(a_{4})\right\vert ^{2}=4.
\end{equation}%
whose solution is given in table
\begin{equation*}
\begin{tabular}{|c|c|c|c|c|}
\hline
& $1\otimes 1\otimes 1$ & $\gamma _{3}\otimes \gamma _{3}\otimes 1$ & $%
1\otimes \gamma _{3}\otimes \gamma _{3}$ & $\gamma _{3}\otimes 1\otimes
\gamma _{3}$ \\ \hline
$R^{(1)}$ & $1$ & $1$ & $1$ & $1$ \\ \hline
$R^{(2)}$ & $1$ & $-1$ & $1$ & $-1$ \\ \hline
$R^{(3)}$ & $1$ & $-1$ & $-1$ & $1$ \\ \hline
$R^{(4)}$ & $1$ & $1$ & $-1$ & $-1$ \\ \hline
\end{tabular}
\end{equation*}%
where $R^{(1)}, R^{(2)}, R^{(3)}$ and $R^{(4)}$ are one-dimensional nonequivalent irreducible representations. Then the projectors are given by
\begin{eqnarray*}
P_{1}^{1} &=&1/4(1\otimes 1\otimes 1+\gamma _{3}\otimes \gamma _{3}\otimes
1+1\otimes \gamma _{3}\otimes \gamma _{3}+ \\
&&\gamma _{3}\otimes 1\otimes \gamma_{3}); 
\end{eqnarray*}
\begin{eqnarray*}
P_{1}^{2} &=&1/4(1\otimes 1\otimes 1-\gamma _{3}\otimes \gamma _{3}\otimes
1+1\otimes \gamma _{3}\otimes \gamma _{3}- \\
&&\gamma _{3}\otimes 1\otimes \gamma_{3});
\end{eqnarray*}
\begin{eqnarray*}
P_{1}^{3} &=&1/4(1\otimes 1\otimes 1-\gamma _{3}\otimes \gamma _{3}\otimes
1-1\otimes \gamma _{3}\otimes \\
&&\gamma _{3}+\gamma _{3}\otimes 1\otimes \gamma_{3});
\end{eqnarray*}
\begin{eqnarray*}
P_{1}^{4} &=&1/4(1\otimes 1\otimes 1+\gamma _{3}\otimes \gamma _{3}\otimes
1-1\otimes \gamma _{3}\otimes \\
&&\gamma _{3}-\gamma _{3}\otimes 1\otimes \gamma_{3}).
\end{eqnarray*}%
where in $P_{i}^{j}$, the superscript $j$ refers to the irreducible representation and the subscript $i$ indicates the error operator
$\Gamma_{1}$ belonging to the algebra $A_{1}$. Making an analysis in terms of projectors acting on state $(\gamma _{3}\otimes \gamma _{3}\otimes \gamma _{3})(\varepsilon
_{1}\otimes \varepsilon _{1}\otimes \varepsilon _{1}) +(\gamma _{1}\otimes \gamma _{1}\otimes \gamma _{1})(\varepsilon
_{1}\otimes \varepsilon _{1}\otimes \varepsilon _{1})$ taken
arbitrarily, we have by using the relations (\ref{clifford1}) and (\ref{clifford2})
\begin{eqnarray*}
\psi _{1}^{1}&=&P_{1}^{1}[(\gamma _{3}\otimes \gamma _{3}\otimes \gamma _{3})(\varepsilon
_{1}\otimes \varepsilon _{1}\otimes \varepsilon _{1}) +(\gamma _{1}\otimes \gamma _{1}\otimes \gamma _{1})(\varepsilon
_{1}\otimes \varepsilon _{1}\otimes \varepsilon _{1})] \\
&=&(\gamma _{3}\otimes \gamma _{3} \otimes \gamma _{3})(\varepsilon_{1}\otimes \varepsilon _{1}\otimes \varepsilon _{1}) +(\gamma _{1}\otimes \gamma _{1}\otimes \gamma _{1})(\varepsilon_{1}\otimes \varepsilon _{1}\otimes \varepsilon _{1});
\end{eqnarray*}
\begin{eqnarray*}
\psi _{1}^{2}=P_{1}^{2}[(\gamma _{3}\otimes \gamma _{3}\otimes \gamma _{3})(\varepsilon
_{1}\otimes \varepsilon _{1}\otimes \varepsilon _{1})+(\gamma _{1}\otimes \gamma _{1}\otimes \gamma _{1})(\varepsilon
_{1}\otimes \varepsilon _{1}\otimes \varepsilon _{1})] &=&0;
\end{eqnarray*}
\begin{eqnarray*}
\psi _{1}^{3}=P_{1}^{3}[(\gamma _{3}\otimes \gamma _{3}\otimes \gamma _{3})(\varepsilon
_{1}\otimes \varepsilon _{1}\otimes \varepsilon _{1})+(\gamma _{1}\otimes \gamma _{1}\otimes \gamma _{1})(\varepsilon
_{1}\otimes \varepsilon _{1}\otimes \varepsilon _{1})] &=&0;
\end{eqnarray*}
\begin{eqnarray*}
\psi _{1}^{4}=P_{1}^{4}[(\gamma _{3}\otimes \gamma _{3}\otimes \gamma _{3})(\varepsilon
_{1}\otimes \varepsilon _{1}\otimes \varepsilon _{1})+ (\gamma _{1}\otimes \gamma _{1}\otimes \gamma _{1})(\varepsilon
_{1}\otimes \varepsilon _{1}\otimes \varepsilon _{1})]&=&0,
\end{eqnarray*}%
where in $\psi_{i}^{j}$, the superscript $j$ refers to the irreducible representation and the subscript $i$ indicates the error operator $\Gamma_{1}$ belonging to the algebra $A_{1}$. Consequently,
\begin{eqnarray*}
\Gamma _{1} \psi _{1}^{1} &=&\Gamma _{1}[(\gamma _{3}\otimes
\gamma _{3}\otimes \gamma _{3})(\varepsilon _{1}\otimes \varepsilon
_{1}\otimes \varepsilon _{1})+ \\
&&(\gamma _{1}\otimes \gamma _{1}\otimes \gamma _{1})(\varepsilon
_{1}\otimes \varepsilon _{1}\otimes \varepsilon _{1})] \\
&=&(k_{1,1}+k_{1,2}+k_{1,3}+k_{1,4})[(\gamma _{3} \otimes \gamma _{3}\otimes \gamma _{3})(\varepsilon _{1}\otimes \varepsilon _{1}\otimes \varepsilon_{1})+ \\
&&(\gamma _{1}\otimes \gamma _{1}\otimes \gamma _{1})(\varepsilon
_{1}\otimes \varepsilon _{1}\otimes \varepsilon _{1})] \\
&=&(k_{1,1}+k_{1,2}+k_{1,3}+k_{1,4})\psi _{1}^{1},
\end{eqnarray*}%
and the subspace $\psi _{1}^{1}=(\gamma _{3}\otimes \gamma
_{3}\otimes \gamma _{3})(\varepsilon _{1}\otimes \varepsilon _{1}\otimes
\varepsilon _{1})+(\gamma _{1}\otimes \gamma _{1}\otimes \gamma
_{1})(\varepsilon _{1}\otimes \varepsilon _{1}\otimes \varepsilon _{1})$ of the minimal left ideal of $Cl_{3}\otimes Cl_{3}\otimes Cl_{3}$, corresponding to the $GHZ$ state \cite{Dur} in terms of the Hilbert space, it is insensitive to the effect of noise $\Gamma _{1}$, i.e. this space is a decoherence-free subspace be the effect of noise $\Gamma _{1}$.

Now consider the effect of noise on a system of four qubits:
\begin{eqnarray*}
\Gamma _{2}&=&k_{2,1}1\otimes 1\otimes 1\otimes 1+k_{2,2}\gamma _{1}\otimes
\gamma _{1}\otimes 1\otimes \\
&&1+k_{2,3}1\otimes 1\otimes \gamma _{1}\otimes \gamma _{1}+k_{2,4}\gamma _{1}\otimes\\
&&\gamma _{1}\otimes \gamma _{1}\otimes \gamma _{1}
\end{eqnarray*}%
In particular, by our formulation, the projection operators are given by
\begin{eqnarray*}
P_{2}^{1} &=&1/4(1\otimes 1\otimes 1\otimes 1+\gamma _{1}\otimes \gamma
_{1}\otimes 1\otimes 1+\\
&&1\otimes 1\otimes \gamma _{1}\otimes \gamma
_{1}+\gamma _{1}\otimes \gamma _{1}\otimes \gamma _{1}\otimes \gamma _{1}),
\end{eqnarray*}
\begin{eqnarray*}
P_{2}^{2} &=&1/4(1\otimes 1\otimes 1\otimes 1+\gamma _{1}\otimes \gamma
_{1}\otimes 1\otimes 1+\\
&&1\otimes 1\otimes \gamma _{1}\otimes \gamma
_{1}+\gamma _{1}\otimes \gamma _{1}\otimes \gamma _{1}\otimes \gamma _{1}),
\end{eqnarray*}
\begin{eqnarray*}
P_{2}^{3} &=&1/4(1\otimes 1\otimes 1\otimes 1+\gamma _{1}\otimes \gamma
_{1}\otimes 1\otimes 1+ \\
&&1\otimes 1\otimes \gamma _{1}\otimes \gamma
_{1}+\gamma _{1}\otimes \gamma _{1}\otimes \gamma _{1}\otimes \gamma _{1}),
\end{eqnarray*}
\begin{eqnarray*}
P_{2}^{4} &=&1/4(1\otimes 1\otimes 1\otimes 1+\gamma _{1}\otimes \gamma
_{1}\otimes 1\otimes 1+ \\
&&1\otimes 1\otimes \gamma _{1}\otimes \gamma
_{1}+\gamma _{1}\otimes \gamma _{1}\otimes \gamma _{1}\otimes \gamma _{1}).
\end{eqnarray*}%
Applying these projectors in the state $(\gamma _{3}\otimes \gamma _{3}\otimes
\gamma _{3}\otimes \gamma _{3})(\varepsilon _{1}\otimes \varepsilon
_{1}\otimes \varepsilon _{1}\otimes \varepsilon _{1})$, we obtain
\begin{eqnarray*}
\psi _{2}^{1} &=&P_{2}^{1}(\gamma _{3}\otimes \gamma _{3}\otimes \gamma
_{3}\otimes \gamma _{3})(\varepsilon _{1}\otimes \varepsilon _{1}\otimes
\varepsilon _{1}\otimes \varepsilon _{1}) \\
&=&[(\gamma _{3}\otimes \gamma _{3}\otimes \gamma _{3}\otimes \gamma
_{3})+(\gamma _{1}\otimes \gamma _{1}\otimes \gamma _{3}\otimes \gamma _{3})
\\
&&+(\gamma _{3}\otimes \gamma _{3}\otimes \gamma _{1}\otimes \gamma
_{1})+ (\gamma _{1}\otimes \gamma _{1}\otimes \gamma _{1}\otimes \gamma
_{1})]\\
&&(\varepsilon _{1}\otimes \varepsilon _{1}\otimes \varepsilon
_{1}\otimes \varepsilon _{1}),
\end{eqnarray*}%
\begin{eqnarray*}
\psi _{2}^{2} &=&P_{2}^{2}(\gamma _{3}\otimes \gamma _{3}\otimes \gamma
_{3}\otimes \gamma _{3})(\varepsilon _{1}\otimes \varepsilon _{1}\otimes
\varepsilon _{1}\otimes \varepsilon _{1}) \\
&=&[(\gamma _{3}\otimes \gamma _{3}\otimes \gamma _{3}\otimes \gamma
_{3})+(\gamma _{1}\otimes \gamma _{1}\otimes \gamma _{3}\otimes \gamma _{3})
\\
&&-(\gamma _{3}\otimes \gamma _{3}\otimes \gamma _{1}\otimes \gamma
_{1})-(\gamma _{1}\otimes \gamma _{1}\otimes \gamma _{1}\otimes \gamma
_{1})]\\
&&(\varepsilon _{1}\otimes \varepsilon _{1}\otimes \varepsilon
_{1}\otimes \varepsilon _{1}),
\end{eqnarray*}%
\begin{eqnarray*}
\psi _{2}^{3} &=&P_{2}^{3}(\gamma _{3}\otimes \gamma _{3}\otimes \gamma
_{3}\otimes \gamma _{3})(\varepsilon _{1}\otimes \varepsilon _{1}\otimes
\varepsilon _{1}\otimes \varepsilon _{1}) \\
&=&[(\gamma _{3}\otimes \gamma _{3}\otimes \gamma _{3}\otimes \gamma
_{3})-(\gamma _{1}\otimes \gamma _{1}\otimes \gamma _{3}\otimes \gamma _{3})
\\
&&+(\gamma _{3}\otimes \gamma _{3}\otimes \gamma _{1}\otimes \gamma
_{1})-(\gamma _{1}\otimes \gamma _{1}\otimes \gamma _{1}\otimes \gamma
_{1})]\\
&&(\varepsilon _{1}\otimes \varepsilon _{1}\otimes \varepsilon
_{1}\otimes \varepsilon _{1}),
\end{eqnarray*}%
\begin{eqnarray*}
\psi _{2}^{4} &=&P_{2}^{4}(\gamma _{3}\otimes \gamma _{3}\otimes \gamma
_{3}\otimes \gamma _{3})(\varepsilon _{1}\otimes \varepsilon _{1}\otimes
\varepsilon _{1}\otimes \varepsilon _{1}) \\
&=&[(\gamma _{3}\otimes \gamma _{3}\otimes \gamma _{3}\otimes \gamma
_{3})-(\gamma _{1}\otimes \gamma _{1}\otimes \gamma _{3}\otimes \gamma _{3})
\\
&&-(\gamma _{3}\otimes \gamma _{3}\otimes \gamma _{1}\otimes \gamma
_{1})+(\gamma _{1}\otimes \gamma _{1}\otimes \gamma _{1}\otimes \gamma
_{1})]\\
&&(\varepsilon _{1}\otimes \varepsilon _{1}\otimes \varepsilon
_{1}\otimes \varepsilon _{1}).
\end{eqnarray*}%
We then have 4 invariant subspaces given by
\begin{eqnarray*}
\Gamma _{2}\psi _{2}^{1} &=&k_{2,1}\psi _{2}^{1}+k_{2,2}\psi
_{2}^{1}+k_{2,3}\psi _{2}^{1}+k_{2,4}\psi
_{2}^{1}=(k_{2,1}+\\
&&k_{2,2}+k_{2,3}+k_{2,4})\psi _{2}^{1}, \\
\Gamma _{2}\psi _{2}^{2} &=&k_{2,1}\psi _{2}^{2}+k_{2,2}\psi
_{2}^{2}+k_{2,3}\psi _{2}^{2}+k_{2,4}\psi
_{2}^{2}=(k_{2,1}+\\
&&k_{2,2}+k_{2,3}+k_{2,4})\psi _{2}^{2}, \\
\Gamma _{2}\psi _{2}^{3} &=&k_{2,1}\psi _{2}^{2}+k_{2,2}\psi
_{2}^{1}+k_{2,3}\psi _{2}^{1}+k_{2,4}\psi
_{2}^{1}=(k_{2,1}+\\
&&k_{2,2}+k_{2,3}+k_{2,4})\psi _{2}^{3}, \\
\Gamma _{2}\psi _{2}^{4} &=&k_{2,1}\psi _{2}^{2}+k_{2,2}\psi
_{2}^{1}+k_{2,3}\psi _{2}^{1}+k_{2,4}\psi
_{2}^{1}=(k_{2,1}+\\
&&k_{2,2}+k_{2,3}+k_{2,4})\psi _{2}^{4}.
\end{eqnarray*}

Thus the subspace $\psi _{2}^{1}+\psi _{2}^{2}+\psi _{2}^{3}+\psi_{2}^{4}$ is decoherence-free by the action of noise. There is a correspondence with the results of Refs. \cite{lidar1,lidar2}. Finally, consider an example involving bivectors; the noise is given by
\begin{eqnarray}
\Gamma _{3} &=&k_{3,1}1\otimes 1\otimes 1\otimes 1+k_{3,2}\gamma _{1}\otimes
\gamma _{1}\otimes \gamma _{1}\otimes \notag \\
&&\gamma _{1}+k_{3,3}\gamma _{2}\otimes \gamma _{2}\otimes \gamma _{2}\otimes \gamma
_{2}+k_{3,4}\gamma _{1}\gamma _{2}  \nonumber\\
&&\otimes \gamma _{1}\gamma _{2}\otimes\gamma _{1}\gamma _{2}\otimes \gamma _{1}\gamma _{2}.
\end{eqnarray}%
Note that
\begin{equation}
\gamma _{1}(\gamma _{1}\gamma _{2})=-(\gamma _{1}\gamma _{2})\gamma _{1}.
\end{equation}%
However, the tensor product
\begin{equation}
(\gamma _{1}\otimes \gamma _{1})(\gamma _{1}\gamma _{2}\otimes \gamma
_{1}\gamma _{2})=(\gamma _{1}\gamma _{2}\otimes \gamma _{1}\gamma
_{2})(\gamma _{1}\otimes \gamma _{1}),
\end{equation}%
commutes. This always occurs for a pair number of factors, which corresponds to the above case, where there are four factors.
The same goes for all other elements, so we have four elements that commute between themselves resulting in four one-dimensional
irreducible representations. The projectors are given by
\begin{eqnarray*}
P_{3}^{1} &=&1/4(1\otimes 1\otimes 1\otimes 1+\gamma _{1}\otimes \gamma
_{1}\otimes \gamma _{1}\otimes \gamma _{1}  \notag \\
&&+\gamma _{2}\otimes \gamma _{2}\otimes \gamma _{2}\otimes \gamma
_{2}+\gamma _{1}\gamma _{2}\otimes \gamma _{1}\gamma _{2} \otimes \\
&&\gamma_{1}\gamma _{2}\otimes \gamma _{1}\gamma _{2}),  \notag \\
P_{3}^{2} &=&1/4(1\otimes 1\otimes 1\otimes 1+\gamma _{1}\otimes \gamma
_{1}\otimes \gamma _{1}\otimes \gamma _{1}  \notag \\
&&-\gamma _{2}\otimes \gamma _{2}\otimes \gamma _{2}\otimes \gamma
_{2}-\gamma _{1}\gamma _{2}\otimes \gamma _{1}\gamma _{2}\otimes \\
&&\gamma_{1}\gamma _{2}\otimes \gamma _{1}\gamma _{2}),  \notag \\
P_{3}^{3} &=&1/4(1\otimes 1\otimes 1\otimes 1-\gamma _{1}\otimes \gamma
_{1}\otimes \gamma _{1}\otimes \gamma _{1}  \notag \\
&&+\gamma _{2}\otimes \gamma _{2}\otimes \gamma _{2}\otimes \gamma
_{2}-\gamma _{1}\gamma _{2}\otimes \gamma _{1}\gamma _{2}\otimes \\
&&\gamma_{1}\gamma _{2}\otimes \gamma _{1}\gamma _{2}),  \notag \\
P_{3}^{4} &=&1/4(1\otimes 1\otimes 1\otimes 1-\gamma _{1}\otimes \gamma
_{1}\otimes \gamma _{1}\otimes \gamma _{1}  \notag \\
&&-\gamma _{2}\otimes \gamma _{2}\otimes \gamma _{2}\otimes \gamma
_{2}+\gamma _{1}\gamma _{2}\otimes \gamma _{1}\gamma _{2}\otimes \\
&&\gamma_{1}\gamma _{2}\otimes \gamma _{1}\gamma _{2}).
\end{eqnarray*}%
The invariant subspaces are given by
\begin{eqnarray*}
\psi _{3}^{1} &=&P_{3}^{1}(\gamma _{3}\otimes \gamma _{3}\otimes \gamma
_{3}\otimes \gamma _{3})(\varepsilon _{1}\otimes \varepsilon _{1}\otimes
\varepsilon _{1}\otimes \varepsilon _{1}) \\
&=&2[(\gamma _{3}\otimes \gamma _{3}\otimes \gamma _{3}\otimes \gamma
_{3})(\varepsilon _{1}\otimes \varepsilon _{1}\otimes \varepsilon
_{1}\otimes \varepsilon _{1}) \\
&&+(\gamma _{1}\otimes \gamma _{1}\otimes \gamma _{1}\otimes \gamma
_{1})(\varepsilon _{1}\otimes \varepsilon _{1}\otimes \varepsilon
_{1}\otimes \varepsilon _{1}), \\
\psi _{3}^{2} &=&P_{3}^{2}(\gamma _{3}\otimes \gamma _{3}\otimes \gamma
_{3}\otimes \gamma _{3})(\varepsilon _{1}\otimes \varepsilon _{1}\otimes
\varepsilon _{1}\otimes \varepsilon _{1}) \\
&=&0, \\
\psi _{3}^{3} &=&P_{3}^{3}(\gamma _{3}\otimes \gamma _{3}\otimes \gamma
_{3}\otimes \gamma _{3})(\varepsilon _{1}\otimes \varepsilon _{1}\otimes
\varepsilon _{1}\otimes \varepsilon _{1}) \\
&=&0, \\
\psi _{3}^{4} &=&P_{3}^{4}(\gamma _{3}\otimes \gamma _{3}\otimes \gamma
_{3}\otimes \gamma _{3})(\varepsilon _{1}\otimes \varepsilon _{1}\otimes
\varepsilon _{1}\otimes \varepsilon _{1}) \\
&=&2[(\gamma _{3}\otimes \gamma _{3}\otimes \gamma _{3}\otimes \gamma
_{3})(\varepsilon _{1}\otimes \varepsilon _{1}\otimes \varepsilon
_{1}\otimes \varepsilon _{1}) \\
&&-(\gamma _{1}\otimes \gamma _{1}\otimes \gamma _{1}\otimes \gamma
_{1})(\varepsilon _{1}\otimes \varepsilon _{1}\otimes \varepsilon
_{1}\otimes \varepsilon _{1})].
\end{eqnarray*}%
The effect of noise on $\psi_{i}^{j}$ results in
\begin{eqnarray*}
\Gamma _{3}\psi _{3}^{1} &=&\alpha _{3,1}\psi _{3}^{1}+\alpha _{3,2}\psi
_{3}^{1}+\alpha _{3,3}\psi _{3}^{1}+\alpha _{3,4}\psi _{3}^{1}\\
&=&(\alpha_{3,1}+\alpha _{3,2}+\alpha _{3,3}+\alpha _{3,4})\psi _{3}^{1},
\end{eqnarray*}
\begin{eqnarray*}
\Gamma _{3}\psi _{3}^{4} &=&\alpha _{3,1}\psi _{3}^{4}+\alpha _{3,2}\psi
_{3}^{4}+\alpha _{3,3}\psi _{3}^{4}+\alpha _{3,4}\psi _{3}^{4} \\
&=&(\alpha_{3,1}+\alpha _{3,2}+\alpha _{3,3}+\alpha _{3,4})\psi _{3}^{4}.
\end{eqnarray*}
Thus $\psi _{3}^{1}+\psi _{3}^{4}$ is a decoherence-free subspace on the effect of noise $\Gamma_{3}$.

\section{Conclusions} \label{sec:conclu}
Algebraic methods have been useful in the developed of the quantum mechanics since its beginnings. In this paper we present a general approach that enables the construction of the decoherence-free subspaces in a purely algebraic scenario. For this purpose, we first generalize some theorems about representations of semisimple algebras due Pauli and Artin. Then we derive orthogonality theorems for irreducible representations of algebras. Based on these mathematical results, we developed a scheme using the tensor product of Clifford algebras and minimals left ideals to construct the decoherence-free subspaces. An advantage of this formalism is that it provides a systematic method for the construction of DFSs in the composite systems within the same algebraic structure. Another advantage of this formalism is that all elements (states and operators) can be described in terms of generators of the tensor product of algebras. Also, since we are describing the systems in an algebraic way, different representations may be derived conveniently.

\end{document}